\documentclass[a4paper,UKenglish,pdfa,thm-restate,numberwithinsect]{article}

\usepackage[utf8]{inputenc}
\usepackage{fullpage}
\usepackage{enumitem}
\usepackage{amsmath, amsthm, amssymb, mathtools, xcolor, framed, tikz}
\usepackage{thmtools}
\usepackage{algorithm}
\usepackage{algpseudocode}
\usepackage{comment}
\usetikzlibrary{trees,arrows.meta,positioning,shapes.multipart}
\usetikzlibrary{positioning}
\usetikzlibrary{shapes}
\usetikzlibrary{decorations.pathreplacing}
\usetikzlibrary{calc,patterns,angles,quotes}
\usetikzlibrary{intersections}
\usepackage{hyperref}
\usepackage{float}
\hypersetup{colorlinks=true, citecolor=blue}
\usepackage[capitalise]{cleveref}
\usepackage{soul}
\usepackage{cite}
\usepackage{mathabx}
\usepackage{regexpatch}

\makeatletter
\@addtoreset{ALG@line}{algorithm}
\makeatother

\newcommand{\Dcc}{\textrm{D}^{cc}}
\newcommand{\Rcc}{\textrm{R}^{cc}}
\newcommand{\Qcc}{\textrm{Q}^{cc}}

\newcommand{\bs}{\textrm{bs}}

\newcommand{\mbs}{\textrm{MBS}}

\newcommand{\spar}{\textrm{spar}}

\newcommand{\adeg}{\widetilde{\textrm{deg}}}
\newcommand{\arank}{\widetilde{\textrm{rank}}}
\newcommand{\rank}{\textrm{rank}}

\renewcommand{\deg}{\textrm{deg}}

\newcommand{\agamma}{\widetilde{\gamma_2}}
\newcommand{\sampleliftedrho}
{\textsc{LiftedRestriction}}

\newcommand{\OR}{\mbox{{\sc Or}}}
\newcommand{\pOR}{\mbox{{\sc promise-Or}}}
\newcommand{\AND}{\mbox{{\sc And}}}
\newcommand{\ANDOR}{\mbox{{\sc And-Or}}}
\newcommand{\XOR}{\mbox{{\sc Xor}}}
\newcommand{\EQ}{\mbox{{\sc EQ}}}
\newcommand{\UDISJ}{\mbox{{\sc UDISJ}}}

\renewcommand{\epsilon}{\varepsilon}

\newcommand{\lift}{\textrm{Lift}}

\newcommand{\fourierwt}[1]{\left\|\widehat{#1}\right\|_1}
\newcommand{\fsparsity}[1]{\left\|\widehat{#1}\right\|_0}

\declaretheorem[name=Theorem,numberwithin=section]{theorem}

\declaretheorem[name=Lemma,sibling=theorem]{lemma}
\declaretheorem[name=Claim,sibling=theorem]{claim}

\declaretheorem[name=Definition,sibling=theorem]{definition}
\declaretheorem[name=Example,sibling=theorem]{example}
\declaretheorem[name=Remark,sibling=theorem]{remark}

\declaretheorem[name=Conjecture,sibling=theorem]{conjecture}

\makeatletter
\xpatchcmd\thmt@restatable{%
  \csname #2\@xa\endcsname\ifx\@nx#1\@nx\else[{#1}]\fi
}{%
  \ifthmt@thisistheone
    \csname #2\@xa\endcsname\ifx\@nx#1\@nx\else[{#1}]\fi
  \else
    \csname #2\@xa\endcsname[{Restated}]
  \fi
}{}{}
\makeatother

\title{Quantum–Classical Equivalence for \AND-Functions}

\author{
Sreejata Kishor Bhattacharya
\thanks{TIFR, Mumbai. Email: {\tt sreejata.bhattacharya@tifr.res.in}. Supported by the Department of Atomic Energy, Govmt. of India, under project \#RTI4014 and by a Google PhD Fellowship.}
\and
Farzan Byramji
\thanks{UC San Diego. Email: {\tt fbyramji@ucsd.edu}. Supported by a Simons Investigator Award \#929894, and NSF awards CCF-2425349 and AF: Medium 2212136.}
\and
Arkadev Chattopadhyay
\thanks{TIFR, Mumbai. Email: {\tt arkadev.c@tifr.res.in}. Supported by the Department of Atomic Energy, Govmt. of India, under project \#RTI4014 and by a Google India Faculty Award.} 
\and 
Yogesh Dahiya 
\thanks{UC San Diego. Email: {\tt ydahiya@ucsd.edu}. Supported by Simons Investigator Award \#929894 and NSF award CCF-2425349.} 
\and 
Shachar Lovett 
\thanks{UC San Diego. Email: {\tt slovett@ucsd.edu}. Supported by Simons Investigator Award \#929894 and NSF award CCF-2425349.} 
}
\date{}

\begin{document}
\maketitle

\begin{abstract}
A major open problem in quantum communication complexity is whether
quantum protocols can be exponentially more efficient than classical protocols for
computing total Boolean functions; the prevailing conjecture is that they cannot be so.
In a seminal work, Razborov (2002) resolved this question for $\AND$-functions of the form
\[
F(x,y) = f(x_1 \land y_1, \ldots, x_n \land y_n),
\]
when the outer function \( f \) is symmetric, by proving that their bounded-error
quantum and classical communication complexities are polynomially related.
Since then, extending this result to \emph{all} $\AND$-functions has remained open
and has been posed by several authors.

In this work, we settle this problem in a strong way. We show that for every Boolean function \( f \),
the bounded-error quantum and classical deterministic communication complexities of the
function \( f \circ \AND_2 \) are polynomially related, up to polylogarithmic
factors in \( n \). We prove this by showing that
both are characterized—up
to polynomial loss—by the logarithm of the De Morgan sparsity of \( f \).

Our results build on the recent work of Chattopadhyay, Dahiya, and Lovett \cite{chattopadhyay2026restriction} on structural
characterizations of non-sparse Boolean functions, which we extend to
resolve the conjecture for general $\AND$-functions.
\end{abstract}

\section{Introduction}

Understanding when/if the laws of quantum mechanics can be exploited to provide significant computational advantage is a major research theme. Two classical results show that indeed quantum algorithms can have surprising power. Shor's polynomial time algorithm on quantum computers for integer factoring beats all \emph{known} classical factoring algorithms by an \emph{exponential} speed-up. Grover search in the quantum query model beats all \emph{possible} classical algorithms by a \emph{quadratic} factor. The first result on exponential speed-up is \emph{conditional} on the hardness of factoring by classical computers. The second result is provable advantage but only quadratic and we know no better is possible in that query model. In this work, we're concerned with provable exponential advantage for quantum models when the task is to compute a total Boolean function. It is well known that both the factoring problem and the search problem for a marked item that Grover's algorithm solves, can readily be re-phrased in terms of total Boolean functions. 

Communication complexity, introduced by Yao~\cite{yao1979some}, studies the amount of
communication required to compute a function whose input is distributed among
multiple parties. Since its inception, communication complexity has become a central tool in
theoretical computer science, with applications ranging from streaming algorithms
and time--space tradeoffs to data structure lower bounds and circuit complexity. See the textbooks~\cite{KN-CC-book,rao2020communication} for excellent introductions to the area and its applications.

In the most standard setting—the two-party model—two players, Alice and Bob,
wish to compute a Boolean function
\( F : X \times Y \to \{0,1\} \),
where Alice receives \( x \in X \) and Bob receives \( y \in Y \).
They exchange messages according to a pre-agreed protocol in order to compute
\( F(x,y) \), with the goal of minimizing the total number of bits communicated in
the worst case.

Several variants of communication complexity arise depending on the type of
interaction allowed and whether the protocol is permitted to err with small
probability. In this work, we focus on three such models: two classical models and one quantum
model.

In the classical deterministic model, the \emph{deterministic communication
complexity} of \( F \), denoted \( \Dcc(F) \), is the minimum number of bits that
must be exchanged by a protocol that computes \( F(x,y) \) correctly on all inputs.
In the public-coin randomized model, Alice and Bob have access to shared public
randomness and are required to compute \( F(x,y) \) with error probability at most
\( 1/3 \) on every input; the corresponding complexity measure is the
\emph{randomized communication complexity}, denoted \( \Rcc(F) \).

The quantum variant of communication complexity, also introduced by
Yao~\cite{yao1993quantum}, allows the messages exchanged between the parties to be
quantum states (qubits). At the end of the protocol, one of the parties performs a measurement on its
quantum state to produce the output. In addition, the parties may share an arbitrary entangled state prior to the start
of the protocol, at no communication cost~\cite{cleve1997substituting}. The number of qubits exchanged by the best such protocol that computes
\( F(x,y) \) with error probability at most \( 1/3 \), for the worst input, is called the \emph{bounded-error quantum communication complexity} of \( F \), and is denoted
\( \Qcc(F) \).

A central question in quantum computation is to understand when quantum models
enable efficient solutions to problems that are believed to be hard for classical
models. In the context of communication complexity, this question asks when
quantum protocols can be super-polynomially more efficient than classical
(randomized) protocols, and has been a major driving force behind research in
the area.

Exponential quantum advantages were discovered early on for structured problems arising from
partial functions, sampling problems and relations, 
see \cite{raz1999exponential,buhrman2001quantum,ambainis2003quantum,gavinsky2006bounded,RegevK11}. This understanding has been refined over a number of papers, see \cite{Gavinsky20,G20,GirishRT21}, with latest work exhibiting exponential quantum advantage over such promised problems even when the quantum protocol is extremely restricted and no restrictions are imposed on competing classical protocols. In a very recent work, exponential quantum advantage was shown for a total search problem in TFNP by Goos et. al. \cite{GoosG0L25}.
However, as re-iterated by \cite{GoosG0L25}, no such separation is known for \emph{total} Boolean
functions which remains the holy grail of the search for quantum advantage. The largest gap currently known between bounded-error quantum and randomized communication
complexity for a total function is only polynomial: a cubic separation (up to
polylogarithmic factors), obtained by lifting cubic separations between quantum
and randomized query complexity \cite{bansal2021k,sherstov2021optimal}.

These results have led to the prevailing belief that, in the absence of
special structure—most notably for total Boolean functions—quantum and randomized
communication complexities are always polynomially related. This belief was
explicitly formulated by Shi and Zhu~\cite{SZ09} as the
\emph{Log-Equivalence Conjecture} (LEC).

\begin{conjecture}[Log-Equivalence Conjecture (LEC) \cite{SZ09}]
For every total Boolean function, the bounded-error quantum and randomized
communication complexities are polynomially related in the two-party
communication model.
\end{conjecture}

Despite intensive research efforts, the problem of proving or disproving the LEC remains wide open.
In light of the lack of progress on general total functions, several authors \cite{BuhrmanW01,razborov2003quantum,Klauck07,SZ09} have
proposed studying \emph{composed} communication problems of the form
\( F = f \circ g \), where \( g : \{0,1\}^b \times \{0,1\}^b \to \{0,1\} \) is a small,
preferably constant-size, gadget.
Among such restricted classes, $\AND$-functions of the form
\[
F(x,y) = f(x_1 \land y_1, \ldots, x_n \land y_n)
\]
have received particular attention.
This is partly motivated by the fact that some of the most studied problems in
communication complexity, such as \emph{Set Disjointness} and \emph{Inner Product},
are $\AND$-functions.

To the best of our knowledge, $\AND$-functions, especially in the context of quantum communication complexity, were first systematically studied by Buhrman and de Wolf~\cite{BuhrmanW01}. They showed that for this class, deterministic communication complexity and
zero-error quantum communication complexity are polynomially related when the
outer function \( f \) is symmetric or monotone.
However, the relationship between bounded-error quantum and randomized
communication complexity for $\AND$-functions remained poorly understood at the time.
Indeed, very few lower bounds were known against bounded-error quantum protocols,
essentially limited to those—such as Inner Product—obtained via the discrepancy method~\cite{kremer1995quantum}. In particular, no polynomial lower bound on the quantum communication complexity
of Set Disjointness was known. 

In a major breakthrough, Razborov~\cite{razborov2003quantum} established the optimal
\( \Omega(\sqrt{n}) \) lower bound for the bounded-error quantum communication complexity of
Set Disjointness over a universe of size \( n \). More generally, his method yielded that for 
$\AND$-functions \( f \circ \AND_2 \) with symmetric outer function
\( f \), the bounded-error quantum communication complexity is polynomially
equivalent even to the deterministic communication complexity.
Extending this result to \emph{all} $\AND$-functions remained open since then
and this has been investigated by several authors \cite{BuhrmanW01,razborov2003quantum,Klauck07,sherstov2008pattern,SZ09,Sherstov10}. Our main result resolves this problem.

\begin{restatable}{theorem}
{lecandfns}\label{thm:lec-and-fns}
Let \( f : \{0,1\}^n \to \{0,1\} \) be any Boolean function. Then:
\begin{enumerate}
\item
\(
\Dcc(f \circ \AND_2)
= O\!\left(\Qcc(f \circ \AND_2)^{7} \cdot (\log n)^2\right).
\)

\item
\(
\Dcc(f \circ \AND_2)
= O\!\left(\Rcc(f \circ \AND_2)^{5} \cdot (\log n)^2\right).
\)
\end{enumerate}
\end{restatable}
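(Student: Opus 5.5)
The plan is to route both parts through a single complexity measure: the De Morgan sparsity of $f$. This is the number of nonzero coefficients of $f$ written as a real polynomial in the AND basis, $f=\sum_S c_S\prod_{i\in S}x_i$. Write $s=\spar(f)$. The proof then has two halves.

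\textbf{Upper bound.} We show $\Dcc(f\circ\AND_2)\le \mathrm{poly}(\log s)\cdot O(\log n)$, or more precisely a bound of the form $O(\log^{c} s\cdot \log^2 n)$. The communication matrix of $f\circ\AND_2$ is $M=\sum_S c_S\, u_S v_S^{\top}$ with $u_S(x)=\prod_{i\in S}x_i$, so $\rank(M)\le s$. This alone only gives protocols via log-rank results, which are too lossy. Instead I would use the structural result of \cite{chattopadhyay2026restriction}: a sparse Boolean function admits a small AND-decision-tree-like structure, i.e.\ a restriction (or a short sequence of AND/OR queries) that drops sparsity by a constant factor. Iterating this gives an AND-decision tree of depth $\mathrm{poly}(\log s)\cdot\log n$. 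Each AND query $\bigwedge_{i\in T}(x_i\wedge y_i)$ costs $2$ bits of communication, so $\Dcc$ is bounded by twice the depth.

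\textbf{Lower bound (quantum).} We show $\Qcc(f\circ\AND_2)\ge \Omega(\log s)^{c'}$. The approach is to find, inside $f$, a structured subfunction whose lift is hard. Using the structural characterization of non-sparse functions from \cite{chattopadhyay2026restriction}, extended here, I would show the following dichotomy: if $\spar(f)=s$, then some restriction of $f$ (fixing variables to $0/1$, which Alice and Bob can simulate for free in $f\circ\AND_2$) contains either
\begin{itemize}
\item a large $\XOR$-like subfunction, giving Inner Product on $m$ bits, or
\item a large promise-OR/AND pattern, giving Set Disjointness on $m$ bits, possibly via a symmetric-function embedding in which Razborov's theorem applies.
\end{itemize}
In either case $m\ge(\log s)^{\Omega(1)}$. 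Razborov's bound then yields $\Qcc\ge\Omega(\sqrt m)$, and discrepancy yields $\Omega(m)$ for IP. Equating exponents gives $\Dcc=O(\Qcc^{7}\log^2 n)$. For $\Rcc$, disjointness costs $\Omega(m)$ rather than $\sqrt m$, which explains the improved exponent $5$.

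\textbf{Main obstacle.} I expect the hardest step to be the extension of the structural theorem to the lower bound. One must show that large De Morgan sparsity forces a large embedded Disjointness/IP instance under AND-compatible restrictions, with only polynomial loss in $\log s$. The known results control sparsity under restrictions (degree/sparsity drops), but extracting a quantitatively large hard pattern needs a new argument. My first attempt would be a potential argument: repeatedly pick a maximal monomial, branch on its variables, and track which branches keep sparsity high. A long chain of such branches yields a sunflower-like disjointness instance, while dense branching yields an XOR structure.
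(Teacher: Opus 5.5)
There is a genuine gap in the lower-bound half, and the dichotomy you propose to fill it is false as stated. Suppose a $0/1$-restriction of $f$, possibly with variables identified, computes $\XOR$ on $m$ blocks or contains a promise-$\OR$ pattern on $m$ blocks. At the all-zero extension, each block is then a disjoint sensitive $0$-block. Since $\mbs$ cannot increase under restrictions or identifications, this gives $\mbs(f)\ge m$. So both branches of your dichotomy would imply $\mbs(f)\ge(\log\spar(f))^{\Omega(1)}$.

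That consequence fails for $f=\AND_n\circ\OR_2$. Here $\spar(f)=3^n$, but $\mbs(f)=2$: $f$ is monotone, and every block turning a $0$-input into a $1$-input must hit the same unsatisfied pair $\{z_i,z_i'\}$, so at most two such blocks can be disjoint. The lifted function is still hard. However, the hardness comes only from complementing inputs: Alice sets $x_i=\neg a_i$, $x_i'=1$, Bob sets $y_i=1$, $y_i'=\neg b_i$, and this yields disjointness on $n$ bits. That embedding is not a restriction of $f$. The regime of small $\mbs$ and large sparsity is exactly where a new idea is needed, and a combinatorial-embedding argument like your potential/sunflower sketch does not supply one.

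The paper instead proves an analytic bound that holds for every $f$: $\log\agamma(f\circ\AND_2)=\Omega((\log s/\log n)^{1/3})$. It goes as follows.
\begin{itemize}
\item Sauer--Shelah applied to the monomial supports of $f$ gives a shattered core set $V$ of size $d=\Omega(\log s/\log n)$. For every $\alpha\in\{0,*\}^V$, one can fix the remaining variables so that the restriction has full degree on the stars.
\item Lifting $0\mapsto(\Delta,0)$ and $*\mapsto(*,1)$ on $V$ fixes every $y$-variable, so each rectangle $g_i(x)h_i(y)$ collapses to a multiple of $g_i$.
\item Averaging over the masked $\Delta$-variables kills every Fourier character that touches them. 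For a random lifted restriction with $pd$ stars and $p=\Theta(d^{-1/3})$, the Fourier $\ell_1$-mass above level $\Theta(d^{1/3})$ becomes exponentially small.
\item The restricted target still has exact degree $\Theta(d^{2/3})$, which contradicts $\deg=O(\adeg^2)$.
\end{itemize}

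The exponents also need more than your generic $\mathrm{poly}(\log s)$ upper bound, so ``equating exponents'' cannot yield $7$ and $5$. The plain $(\log s)^5\log n$ bound gives only exponent $15$. The paper uses the refined bound of Knop et al.~\cite{knop2021log}, $\Dcc=O(\mbs(f)^2\log s\log n)$. It combines this with $\mbs(f)=O(\Qcc^2)$ and $\mbs(f)=O(\Rcc)$, both from the embedded $\UDISJ$. This gives $\Qcc^4\cdot\Qcc^3\log n\cdot\log n$ and $\Rcc^2\cdot\Rcc^3\log n\cdot\log n$.
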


Ignoring polylogarithmic factors in \( n \), this shows that for all
$\AND$-functions, deterministic communication complexity is polynomially
equivalent to bounded-error quantum communication complexity. Notably, prior to our work it was not even known whether randomized
communication complexity is polynomially equivalent to deterministic complexity
for this class of functions.
Before describing our results, we briefly review prior developments related
to this question.

While not stated explicitly in Razborov’s work, his quantum lower bound for Set
Disjointness can be used to obtain lower bounds for a broader class of
composed functions. In particular, consider any constant-size gadget
\( g : \{0,1\}^b \times \{0,1\}^b \to \{0,1\} \)
whose communication matrix contains both $\AND_2$ and $\OR_2$ as submatrices.
For such gadgets, the bounded-error quantum communication complexity of
\( f \circ g \) is
\( \Omega(\sqrt{\bs(f)}) \) \cite{zhang2009tightness}, where \( \bs(f) \) denotes the block sensitivity of
\( f \).
This follows from the fact that a promised Set Disjointness instance of size
\( \bs(f) \) can be embedded into \( f \circ g \).

Combining this lower bound with Nisan’s classical result~\cite{Nisan91}, which
upper bounds the deterministic query complexity of \( f \) by a polynomial in
\( \bs(f) \), yields a quantum–classical equivalence for composed functions
\( f \circ g \) whenever \( g \) embeds both $\AND_2$ and $\OR_2$.
We refer to this class as \emph{$\ANDOR$-functions}.

Subsequently, using different techniques, Sherstov~\cite{sherstov2008pattern}
gave an independent proof of quantum–classical equivalence for $\ANDOR$-functions
via his pattern matrix method.
In related and independent work, Shi and Zhu~\cite{SZ09} proved quantum–classical equivalence for
composed functions with gadgets satisfying certain pseudorandomness properties,
although their gadgets were required to have size \( \Omega(\log n) \).

Another important line of work concerns $\XOR$-functions, i.e., functions composed
with the $\XOR_2$ gadget. Shi and Zhang~\cite{zhang2009communication} showed that,
up to polylogarithmic factors, the Log-Equivalence Conjecture holds for
$\XOR$-functions when the outer function \( f \) is symmetric. Subsequently,
Montanaro and Osborne~\cite{montanaro2009communication} proved a polynomial
equivalence between deterministic and zero-error quantum communication complexity
for $\XOR$-functions with monotone outer functions. With the $\AND$-function case
resolved in this work, extending the Log-Equivalence Conjecture to $\XOR$-functions
in full generality emerges as a natural next milestone toward a complete
understanding of the conjecture.

More recently, Chattopadhyay, Dahiya, and Lovett~\cite{chattopadhyay2026restriction}
revisited the Log-Equivalence Conjecture. In all previously known classes of
functions satisfying the conjecture, an even stronger statement holds: deterministic
communication complexity is polynomially related to bounded-error quantum
communication complexity. In~\cite{chattopadhyay2026restriction}, the authors
studied composed functions of the form \( f \circ \EQ_4 \), where \( \EQ_4 \)
denotes equality on four bits. Since \( \AND \circ \EQ_4 \) (which is just the equality function) belongs to this class,
deterministic communication complexity is exponentially separated from randomized
communication complexity; nevertheless, they showed that the Log-Equivalence
Conjecture continues to hold (up to polylogarithmic factors). Our work builds on and
extends the ideas developed in~\cite{chattopadhyay2026restriction}, which we
elaborate on in subsequent sections.

\subsection{Our Results}
For any Boolean function \( f : \{0,1\}^n \to \{0,1\} \), our main result shows that
the bounded-error quantum and deterministic communication complexities of the
$\AND$-function \( f \circ \AND_2 \) are polynomially related, up to
polylogarithmic factors in \( n \). That is,
\[
\Dcc(f \circ \AND_2)
= \Qcc(f \circ \AND_2)^{O(1)} \cdot (\log n)^{O(1)} .
\]

To establish such a relationship, it is helpful to characterize
\( \Dcc(f \circ \AND_2) \) in terms of structural properties of the outer function
\( f \).
This was achieved by Knop, Lovett, McGuire, and Yuan~\cite{knop2021log}, who showed
that the deterministic communication complexity of \( f \circ \AND_2 \) is
characterized—up to polynomial loss and ignoring poly-log$(n)$ factors—by the logarithm of the \emph{De Morgan sparsity} of \( f \).

Recall that every Boolean function \( f \) admits a unique multilinear polynomial
representation over the reals,
\[
f(x) = \sum_{S \subseteq [n]} a_S \prod_{i \in S} x_i .
\]
The (De Morgan) \emph{sparsity} of \( f \), denoted \( \spar(f) \), is the number of
nonzero coefficients \( a_S \).

Knop et al.~\cite{knop2021log} showed that the deterministic communication
complexity of an $\AND$-function satisfies
\[
\Dcc(f \circ \AND_2)
= O\!\left( (\log \spar(f))^5 \cdot \log n \right).
\]

This characterization has an immediate and important consequence.
Since the sparsity of \( f \) coincides with the rank of the communication
matrix of \(f \circ \AND_2 \), this bound yields a resolution of the log-rank
conjecture for $\AND$-functions, up to a \( \log n \) factor.

Given this characterization of \( \Dcc(f \circ \AND_2) \), our task reduces to
showing that large sparsity of \( f \) forces large bounded-error quantum
communication complexity. Concretely, we show that
\[
\Qcc(f \circ \AND_2)
= (\log \spar(f))^{\Omega(1)},
\]
ignoring polylogarithmic factors in \( n \).
In fact, we prove a stronger statement: large sparsity of \( f \) implies a large
\emph{approximate} \( \gamma_2 \) norm of the communication matrix
\( M_{f \circ \AND_2} \).
This is a strengthening of a direct lower bound on quantum communication
complexity, since the approximate \( \gamma_2 \) norm is a known lower bound on
bounded-error quantum communication complexity.

The approximate \( \gamma_2 \) norm of a two-party Boolean function
\( F : X \times Y \to \{0,1\} \), denoted \( \agamma(F) \), is defined as the minimum
total weight of a rectangle decomposition that approximates the communication
matrix \( M_F \) within constant error:
\[
\agamma(F)
=
\min \Bigl\{
    \sum_i |\alpha_i| \;\Big|\;
    \bigl\| M_F - \sum_i \alpha_i R_i \bigr\|_\infty \le 1/3
\Bigr\},
\]
where each \( R_i(x,y) = g_i(x)h_i(y) \) is a combinatorial rectangle.

Any bounded-error quantum protocol for \( F \) using \( c \) qubits of
communication induces a pointwise approximation of \( M_F \) of the form
\[
M_F \approx \sum_i \alpha_i R_i ,
\]
with total weight \( \sum_i |\alpha_i| \le 2^{O(c)} \). While such decompositions are immediate for deterministic and randomized
protocols, they also hold in the quantum setting by unpacking the definition of
quantum communication protocols (see~\cite{linial2007lower}).

Our main technical contribution is to show that large sparsity of \( f \) lifts to
a lower bound on the approximate \( \gamma_2 \) norm of \( f \circ \AND_2 \).

\begin{restatable}{theorem}
{spargammatwo}\label{thm:spar-gamma_2}

For every total Boolean function \( f : \{0,1\}^n \to \{0,1\} \),
\[
\log \agamma(f \circ \AND_2)
=
\Omega\!\left(
    \left(\frac{\log \spar(f)}{\log n}\right)^{1/3}
\right).
\]
\end{restatable}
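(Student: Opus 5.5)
The plan is to reduce \cref{thm:spar-gamma_2} to two ingredients. The first is a \emph{structural} statement: large sparsity forces large monotone block sensitivity $\mbs(f)$. The second is a \emph{lifting} statement: large $\mbs(f)$ lets us embed a large promise Set Disjointness instance into $f\circ\AND_2$, whose approximate $\gamma_2$ norm is known to be large.

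\textbf{Step 1 (embedding).} Suppose that at some input $z$ there are pairwise disjoint blocks $B_1,\dots,B_k\subseteq\{i: z_i=0\}$ such that $f(z\vee \mathbf 1_{B_i})\neq f(z)$ for every $i$. Replacing each block by a minimal sub-block that still flips $f$, we may assume the blocks are minimal; this means $f(z \vee \mathbf 1_{B'})=f(z)$ for every proper subset $B'$ of a single block. Restrict $f\circ\AND_2$ as follows.
\begin{itemize}[leftmargin=*]
\item On the coordinates where $z_i=1$, set $x_i=y_i=1$.
\item On the coordinates outside $z$ and outside every block, set $x_i=0$.
\item For each block $B_i$ and every $j\in B_i$, set $x_j=a_i$ and $y_j=b_i$, where $a,b\in\{0,1\}^k$ are the new inputs.
\end{itemize}
The inner input to $f$ is then exactly $z\vee\bigcup_{i:\,a_i\wedge b_i=1}\mathbf 1_{B_i}$. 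On the promise $|a\wedge b|\le 1$, the restricted function therefore equals $f(z)$ if $a,b$ are disjoint and $1-f(z)$ otherwise. Hence $f\circ\AND_2$ contains promise Set Disjointness on $k$ bits as a submatrix. Since approximate $\gamma_2$ is monotone under taking submatrices, Razborov's bound in its $\gamma_2$ form (via Sherstov's pattern matrix method) gives $\log\agamma(f\circ\AND_2)=\Omega(\sqrt{k})$. Taking $k=\mbs(f)$ yields $\log\agamma(f\circ\AND_2)=\Omega(\sqrt{\mbs(f)})$.

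\textbf{Step 2 (structure: the main obstacle).} It then suffices to prove
\[
\log\spar(f)=O\bigl(\mbs(f)^{3/2}\log n\bigr),
\]
since plugging this into Step 1 gives exactly the exponent $1/3$. I expect this inequality to be the heart of the argument, and I would follow the restriction framework of \cite{chattopadhyay2026restriction}. The idea is to build an AND-decision-tree-like procedure that, after each round, restricts some variables to $1$ and certifies that sparsity drops substantially.

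Concretely, I would take a maximal monomial $S$ of $f$ and use the disjoint minimal sensitive blocks at the all-zeros point (and at points reached by earlier restrictions). At most $\mbs(f)$ such blocks exist, so their union $U$ meets every monomial in a controlled way. Restricting the variables of $U$ to $1$, or branching on them, should shrink sparsity by a factor $n^{-\Omega(1)}$ in each of $O(\mbs(f)^{3/2})$ rounds; the $\log n$ loss comes from a union bound or from the size of a hitting set. The delicate point is bounding the number of rounds polynomially in $\mbs$ and not in $\log\spar$.

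The first thing I would try is the ``hitting set of minimal blocks'' argument used by Knop et al.\ to prove $\mbs(f)=O((\log\spar f)^2)$, run in reverse. Every monomial of a restriction must meet each sensitive block. So if sparsity stays high after few rounds, one can extract many disjoint sensitive blocks, which contradicts small $\mbs$.

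If that fails, the fallback is to allow a second case in which a restriction of $f$ exhibits Inner-Product-like structure, meaning a large-degree parity on a subcube. That case would be handled directly through the discrepancy/$\gamma_2$ lower bound for Inner Product. Combining the two cases with the trade-off $k^{3/2}\log n\ge \log\spar(f)$ gives the theorem.
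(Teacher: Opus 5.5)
Your Step 1 is fine: it is essentially the $\UDISJ$ embedding (Knop et al.'s Claim~4.7) that the paper itself uses in the consequences section. The reduction fails at Step 2, because the inequality $\log\spar(f)=O(\mbs(f)^{3/2}\log n)$ is false.

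Take the paper's own example $f=\AND_n\circ\OR_2$, i.e.\ $f(z,w)=\prod_{i=1}^n(z_i+w_i-z_iw_i)$. Expanding the product gives $\spar(f)=3^n$. But $\mbs(f)=2$:
\begin{itemize}
\item $f$ is monotone, so $1$-inputs have no sensitive $0$-blocks.
\item At a $0$-input there is some $i_0$ with $z_{i_0}=w_{i_0}=0$. Every sensitive $0$-block must contain $z_{i_0}$ or $w_{i_0}$, so at most two such blocks are pairwise disjoint.
\end{itemize}
Thus $\log\spar(f)=\Theta(n)$ while $\mbs(f)^{3/2}\log n=O(\log n)$. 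Any argument whose lower bound passes only through $\mbs$ gives $\Omega(1)$ here, whereas the theorem asserts $\Omega\bigl((n/\log n)^{1/3}\bigr)$. The only general relation is the reverse one, $\mbs(f)=O(\log^2\spar(f))$, which you cite yourself.

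Your fallback does not help either. Every restriction of a monotone function is monotone, so no subcube carries a parity on two or more variables. Yet $f\circ\AND_2$ really is hard. Write the lifted inputs as $z_i=a_i\wedge b_i$ and $w_i=c_i\wedge d_i$, with Alice holding $(a,c)$ and Bob holding $(b,d)$. Fixing $c\equiv 1$ and $d=\neg b$ gives $\prod_i(a_i\vee\neg b_i)$, which is set disjointness of $\neg a$ and $b$. This hardness comes from fixing some coordinates in a way that depends on whether others are $0$ or free, and $\mbs$ cannot see that.

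The missing idea is a structural consequence of sparsity that captures this kind of adaptivity, together with an argument that does not route through a known hard subproblem. The paper proceeds as follows.
\begin{itemize}
\item Sauer--Shelah on the family of monomial supports gives a shattered set $V$ with $|V|=d=\Omega(\log s/\log n)$.
\item For each $S\subseteq V$, set $V\setminus S$ to $0$ and fix the non-core variables, depending on $S$, so that the coefficient of $\prod_{i\in S}z_i$ survives. This gives a restriction of exact degree $|S|$, and these restrictions form a semi-adaptive max-degree restriction tree.
\item Lift the tree through $\AND_2$: core free coordinates become $(x_i,y_i)=(*,1)$, core zeros become $(\Delta,0)$ with $x_i$ masked, and non-core coordinates get $x_i=1$. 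Then every $h_i(y)$ in a rectangle decomposition becomes constant.
\item Choose $pd$ stars at random with $p=\Theta(d^{-1/3})$ and average over the masked variables. This pushes the expected Fourier $\ell_1$-mass of each $g_i$ above level $k=\Theta(d^{1/3})$ below $2^{-k+1}$, while the target keeps exact degree $pd=\Theta(d^{2/3})$.
\item Truncating then gives a degree-$O(d^{1/3})$ approximator, contradicting $\deg=O(\adeg^2)$.
\end{itemize}
The exponent $1/3$ comes from balancing $p^2d\approx\sqrt{pd}$, not from any $\mbs^{3/2}$ relation.
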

We now record several immediate consequences of
\cref{thm:spar-gamma_2}.
Recall that for a communication problem \(F : X \times Y \to \{0,1\}\), the
\emph{rank} of \(F\) is the rank (over the reals) of its communication matrix \(M_F\), and the
\emph{approximate rank}, denoted \(\arank(F)\), is the minimum rank of a real
matrix that approximates \(M_F\) entrywise within a small error 1/3.

The approximate \( \gamma_2 \) norm can be viewed as a convex relaxation of the
approximate rank and is known to be essentially equivalent to it.
It is shown in~\cite{lee2009approximation} that, on the logarithmic scale, the two
measures coincide up to an additive \( O(\log \log |X||Y|) \) term: for every
communication problem \( F : X \times Y \to \{0,1\} \),
\[
\Omega\!\bigl(\log \agamma(F)\bigr)
\;\le\;
\log \arank(F)
\;\le\;
O\!\bigl(\log \agamma(F) + \log \log |X||Y|\bigr).
\]
Combining this observation with our result and the result of Knop et
al.~\cite{knop2021log}, we obtain
\[
\begin{aligned}
\Rcc(f \circ \AND_2) \leq \Dcc(f \circ \AND_2)
&=
O\!\left((\log \spar(f))^5 \cdot \log n\right)\\
&=
O\!\left((\log \arank(f \circ \AND_2))^{15} \cdot (\log n)^6\right).
\end{aligned}
\]
While the exponents above are not optimized and can be improved (see \cref{subsec:consequences}), this implies that the
\emph{log-approximate-rank conjecture} holds for $\AND$-functions, up to
poly-log$(n)$ factors.

The Log-Approximate-Rank Conjecture (LARC), a term first coined by Lee and Shraibman~\cite{lee2009lower}, is a natural
approximate analogue of the classical Log-Rank conjecture.
It asserts that, just as the logarithm of the rank of the communication matrix is
believed to characterize deterministic communication complexity up to polynomial
loss, the logarithm of the approximate rank should similarly characterize
randomized communication complexity up to polynomial loss. 

Somewhat surprisingly, the LARC was recently shown to be false by Chattopadhyay, Mande, and Sherif~\cite{CMS20}, by exhibiting an $\XOR$-function as a counterexample.
In contrast, our results show that LARC \emph{does} hold for
$\AND$-functions, up to polylogarithmic factors in $n$.

In fact, our results give the following chain of inequalities:

\[
\begin{aligned}
\Omega\!\bigl(\log \agamma(f \circ \AND_2)\bigr)
&\;\le\;
\Qcc(f \circ \AND_2)
\;\le\;
\Rcc(f \circ \AND_2)
\;\le\;
\Dcc(f \circ \AND_2),
\\
\Dcc(f \circ \AND_2)
&\;\le\;
(\log \spar(f))^{O(1)} (\log n)^{O(1)}
\;\le\;
\log \agamma(f \circ \AND_2)^{O(1)} (\log n)^{O(1)} .
\end{aligned}
\]

As a result, for an $\AND$-function \( F := f \circ \AND_2 \), up to polylogarithmic factors in \( n \), the quantities
\[
\Dcc(F),\;
\Rcc(F),\;
\Qcc(F),\;
\log \agamma(F),\;
\log \arank(F),\;
\log \rank(F),
\text{ and}
\log \spar(f)
\]

are all polynomially equivalent. In other words, for $\AND$-functions, diverse
notions of complexity—from communication-theoretic to algebraic/analytic—coincide.
Interestingly, this striking equivalence of various measures was suspected in the early work of
Buhrman and de~Wolf~\cite{BuhrmanW01} on $\AND$-functions more than two decades ago.
Our results finally confirm their intuition. 

\subsection{Proof overview}

As discussed above, our main technical contribution is
Theorem~\ref{thm:spar-gamma_2}, which lifts the sparsity of a Boolean function
\( f \) to a lower bound on the approximate \( \gamma_2 \)-norm of the lifted
function \( f \circ \AND_2 \).
We now give a high-level overview of the proof.

We begin with a concrete example that will later help illustrate the general
argument.
The canonical Boolean function with large sparsity is the \( n \)-bit \( \OR_n \)
function, whose sparsity equals \( 2^n - 1 \).
When composed with the \( \AND_2 \) gadget, the function
\( \OR_n \circ \AND_2 \) corresponds to the \emph{Set Intersection} problem, the
negation of the well-known Set Disjointness function. 

\medskip
\noindent
\textbf{Lower bound for Set Intersection.}
We illustrate our approach using the Set Intersection function
\[
(\OR_n \circ \AND_2)(x,y)
\;=\;
\OR_n(x_1 \wedge y_1,\dots,x_n \wedge y_n),
\]
where we write \( z_i := x_i \wedge y_i \).

The key structural property of \( \OR_n \) is that it remains an $\OR$ under every
restriction \( \rho_z \in \{0,*\}^n \): the restricted function retains full degree
on the free variables. This hardness under \(\{0,*\}\)-restrictions is precisely
what gives \( \OR_n \) its large sparsity. More importantly, as we show later, the existence of many such
\emph{max-degree} restrictions is a general consequence of large sparsity, allowing
the argument to extend beyond Set Intersection to arbitrary $\AND$-functions.

To lift this hardness to the communication setting, we map restrictions
on the \( z \)-variables to restrictions on the input pairs \( (x_i,y_i) \).
Given \( \rho_z \in \{0,*\}^n \), we define a lifted restriction
\( \rho \) by
\[
(\rho(x_i), \rho(y_i)) =
\begin{cases}
(\Delta,0), & \text{if } \rho_z(z_i)=0,\\
(*,1), & \text{if }\rho_z(z_i)=*,
\end{cases}
\]
so that \( (x_i \wedge y_i)|_\rho = \rho_z(z_i) \).
Here \( \Delta \) denotes a free but \emph{masked} variable: although syntactically
free, the restricted function does not depend on it.

These lifted restrictions have three crucial properties:
(i) all \( y \)-variables are fixed,
(ii) the restricted function \( (\OR_n \circ \AND_2)|_{\rho} \) computes an $\OR$ on exactly the \( * \)-variables,
and (iii) it is independent of the masked variables.
As a result, if \( \rho_z \) has exactly \( d \) stars, then
\[
\deg\!\left(
\mathbb{E}_{x_{M(\rho)}}[(\OR_n \circ \AND_2)|_\rho]
\right)
= d,
\]
where \( M(\rho) \) denotes the masked variables and
\( \mathbb{E}_{x_{M(\rho)}}[\cdot] \) denotes expectation over independent uniform
assignments to them.

This motivates the following \emph{restriction-and-averaging procedure}.
We sample an unlifted restriction
\( \rho_z \in \{0,*\}^n \) uniformly at random subject to having
\( d = \Theta(n^{2/3}) \) stars (we will explain shortly why this specific parameter is chosen), lift it to obtain \( \rho \), and then take
expectation over the masked variables.

Under this restriction-and-averaging procedure,
\( \OR_n \circ \AND_2 \) retains its hardness—in terms of degree—with exact degree
\( \Theta(n^{2/3}) \).
On the other hand, we show that the same procedure simplifies any small-weight
rectangle decomposition arising from a small approximate \( \gamma_2 \) norm.
Concretely, it converts such a decomposition into a low-degree polynomial (more
precisely, one whose total Fourier mass on high-degree monomials is small).

This leads to a contradiction.
Indeed, from a small approximate \( \gamma_2 \) norm of size
\(2^{O(n^{1/3})}\) for \( \OR_n \circ \AND_2 \), we would obtain a polynomial
of degree \(O(n^{1/3})\) that approximates
\(
\mathbb{E}_{x_{M(\rho)}}[(\OR_n \circ \AND_2)|_{\rho}]
\),
even though the latter has exact degree \( \Theta(n^{2/3}) \). This contradicts the known quadratic relationship between degree and approximate degree,
forcing the approximate \( \gamma_2 \) norm of
\( \OR_n \circ \AND_2 \) to be \( 2^{\Omega(n^{1/3})} \). Extracting low-degree approximating polynomials from a decomposition exhibiting low approximate \(\gamma_2 \) norm of a matrix, in the context of communication complexity, was initiated in the work of Razborov \cite{razborov2003quantum}, followed by other works, including that of \cite{BuhrmanVW07,Sherstov16}. Our particular method is inspired by the work of \cite{chattopadhyay2026restriction} and the work of Krause and Pudl{\'a}k \cite{KP95} who used it in the context of proving lower bounds on the Fourier sparsity of lifted functions.

We now briefly explain why our restriction-and-averaging procedure simplifies a
small-weight rectangle decomposition into a low-degree polynomial.
Let
\[
\Pi(x,y) \;=\; \sum_{i=1}^m b_i\, g_i(x)\, h_i(y)
\]
be an approximator for Set Intersection with small total weight
\( \sum_i |b_i| \).
Under any lifted restriction, all \( y \)-variables are fixed, so each rectangle
collapses to a function of the \( x \)-variables alone.
Moreover, for any Boolean function \( g : \{0,1\}^n \to \{0,1\} \) on $x$-variables, we show that the
expected Fourier mass of \( g \) above level \( k \geq n^{1/3} \) after restriction and
averaging decays as \( 2^{-\Omega(k)} \) (this is where our choice of $d=\Theta(n^{2/3})$ plays a part).
Intuitively, this happens because any Fourier character involving a masked
variable vanishes upon averaging. An averaging argument then yields a restriction \( \rho \) such that the Fourier
mass of
\(
\mathbb{E}_{x_{M(\rho)}}[\Pi|_{\rho}]
\)
above level \( k = O(n^{1/3}) \) is negligible, provided
\( \sum_i |b_i| \le 2^{O(n^{1/3})} \).
Discarding higher-degree monomials produces a low-degree $O(n^{1/3})$ approximator for
\(
\mathbb{E}_{x_{M(\rho)}}[(\OR_n \circ \AND_2)|_{\rho}],
\)
contradicting its degree \( \Theta(n^{2/3}) \).

This shows a \( 2^{\Omega(n^{1/3})} \) lower bound on the approximate
\( \gamma_2 \) norm of Set Intersection.
Moreover, if you observe closely, the same argument also yields a
\( 2^{\Omega(n^{1/3})} \) lower bound for the approximate
\( \gamma_2 \) norm of Inner Product function
\( \XOR_n \circ \AND_2 \), another central problem in communication complexity.

\medskip
\noindent
\textbf{Generalization to arbitrary functions with large sparsity.}  In the discussion above for Set Intersection and Inner Product, we relied crucially on a structural property of the outer functions \( \OR_n \) and \( \XOR_n \): under every restriction in \( \{0,*\}^n \), the restricted function retains full degree on the surviving free variables. As hinted earlier, the existence of many such \emph{max-degree} restrictions is not specific to these functions. Rather, it is a general consequence of large sparsity—and this is the key insight of our work
that allows us to extend the argument to arbitrary $\AND$-functions.

More concretely, we show that any Boolean function
\( f : \{0,1\}^n \to \{0,1\} \) of large sparsity admits a fixed set of variables \( V \subseteq [n] \), which we call the \emph{core variables}, with the following property. For every assignment
\( \alpha \in \{0,*\}^V \),
the remaining variables in \( [n] \setminus V \) can be fixed—possibly depending on \( \alpha \)—so that the resulting restriction of \( f \) has full degree on the free variables.
We refer to the resulting collection of restrictions as a
\emph{semi-adaptive max-degree restriction tree}.
For \( \OR_n \) and \( \XOR_n \), the core set is simply \( V = [n] \), but in
general this need not be the case.
For a function of sparsity \( s \), we show that one can always find V of size
\(
\Omega(\log s/\log n),
\)
which we call the \emph{depth} of the restriction tree.

Having obtained such a collection of restrictions \( \mathcal{D} \) for \( f \),
we lift them to restrictions for the two-party function
\( f \circ \AND_2 \), extending the construction used for Set Intersection.
Given a restriction \( \rho_z \in \mathcal{D} \), we define a lifted restriction
\( \rho \) on the variables \( (x_i,y_i) \) by
\[
(\rho(x_i), \rho(y_i)) =
\begin{cases}
(\Delta,0), & \text{if } \rho_z(z_i)=0 \text{ and } i \in V,\\
(*,1), & \text{if } \rho_z(z_i)=* \text{ and } i \in V,\\
(1,1), & \text{if } \rho_z(z_i)=1 \text{ and } i \notin V,\\
(1,0), & \text{if } \rho_z(z_i)=0 \text{ and } i \notin V,
\end{cases}
\]
ensuring that \( (x_i \wedge y_i)|_\rho = \rho_z(z_i) \).
As in the Set Intersection case, this construction guarantees that \( (f \circ \AND_2)|_{\rho} \) coincides with \( f|_{\rho_z} \) (up to renaming of
variables) and is independent of the masked variables \( M(\rho) \).
As a result, for every max-degree restriction \( \rho_z \),
\[
\deg\!\left(
\mathbb{E}_{x_{M(\rho)}}\bigl[(f \circ \AND_2)|_{\rho}\bigr]
\right)
= \deg\!\left(
f|_{\rho_z}
\right)
=|\rho_z^{-1}(*)|
=|\rho^{-1}(*)|.
\]

Thus, the lifted function \( f \circ \AND_2 \) retains its hardness under the same
restriction-and-averaging procedure used for Set Intersection.
Moreover, the lifting has an additional convenient feature:
all non-core \( x \)-variables are fixed to \( 1 \).
This allows the same Fourier-analytic argument to go through, showing that the
restriction-and-averaging procedure simplifies any small-weight rectangle
decomposition into a low-degree polynomial.

The only quantitative change from the Set Intersection analysis is that the
effective parameter \( n \) is replaced by the number of core variables
\( |V| = \Omega(\log s / \log n) \).
Carrying out the argument yields
\[
\log \agamma(f \circ \AND_2)
=
\Omega\!\left(
    \left(\frac{\log s}{\log n}\right)^{1/3}
\right),
\]
which is exactly the bound stated in
Theorem~\ref{thm:spar-gamma_2}.

Finally, we place our work in the context of the results of
Chattopadhyay, Dahiya, and Lovett~\cite{chattopadhyay2026restriction}.
Among other results, that work studies the sparsity of Boolean functions and
provides a structural characterization in terms of \emph{max-degree restriction
trees}.
Their restriction trees are fully adaptive: the choice of which variable to
restrict next depends on the outcomes of previous restrictions, and there is no
fixed set of core variables, as in our setting.
Moreover, their framework applies to a broader class of complexity measures,
which they call one-sided and two-sided nice measures.

While this adaptive viewpoint is powerful, we do not know how to use such
fully adaptive restriction trees to control analytic quantities such as the
approximate \( \gamma_2 \) norm.
In contrast, our work constructs a more structured, \emph{semi-adaptive} form of
max-degree restriction trees directly from sparsity, with a fixed core set of
variables on which all \(\{0,*\}\)-restrictions occur.
This additional structure is crucial for our analysis and enables us to lift
sparsity to lower bounds on the approximate \( \gamma_2 \) norm.

\paragraph*{Organization}
In~\cref{sec:preliminaries}, we introduce the necessary preliminaries and notation.
In~\cref{sec:rest-tree-construction}, we formally define semi-adaptive
max-degree restriction trees and show how to construct them for Boolean functions
with large sparsity.
Finally, in~\cref{sec:lifting}, we prove our main technical contribution: a lifting
theorem that lifts the sparsity of a Boolean function \( f \) into a lower
bound on the approximate \( \gamma_2 \) norm of its lifted function
\( f \circ \AND_2 \), and we discuss the resulting consequences for
$\AND$-functions.

\section{Preliminaries}
\label{sec:preliminaries}

All functions considered are defined on the Boolean hypercube
\( \{0,1\}^n \) and unless stated otherwise, all polynomials are real and multilinear.

\paragraph*{Multilinear representations}
Over the Boolean domain, Boolean functions admit a canonical polynomial
representation.

\begin{definition}[Multilinear polynomial representation]
A polynomial \( Q \in \mathbb{R}[x_1,\dots,x_n] \) is \emph{multilinear} if each
variable appears with degree at most one in every monomial.
Every function \( f : \{0,1\}^n \to \mathbb{R} \) admits a unique multilinear
polynomial representation; that is, there exists a unique multilinear polynomial
\( Q \in \mathbb{R}[x_1,\dots,x_n] \) such that
\( Q(x) = f(x) \) for all \( x \in \{0,1\}^n \).
\end{definition}

We freely identify Boolean functions with their unique multilinear polynomial
representations, and use the two interchangeably.

\paragraph*{Polynomial complexity measures}
Let \( Q \in \mathbb{R}[x_1,\dots,x_n] \) be a multilinear polynomial written as
\[
    Q(x) = \sum_{S \subseteq [n]} a_S \prod_{i \in S} x_i .
\]
The \emph{degree} of \( Q \) is
\(
\deg(Q) := \max\{ |S| : a_S \neq 0 \},
\)
and the \emph{sparsity} of \( Q \), denoted \( \spar(Q) \), is the number of
nonzero coefficients \( a_S \).

For a function \( f : \{0,1\}^n \to \mathbb{R} \), let \( \mathcal{P}(f) \) denote
its unique multilinear polynomial representation. We define
\[
    \deg(f) := \deg(\mathcal{P}(f)), \qquad
    \spar(f) := \spar(\mathcal{P}(f)).
\]

\paragraph*{Approximate degree}
Let \( f : \{0,1\}^n \to \mathbb{R} \) and \( \epsilon > 0 \).
The \emph{\(\epsilon\)-approximate degree} of \( f \) is defined as
\[
\adeg_\epsilon(f)
:= \min \bigl\{
    \deg(Q) :
    |Q(x) - f(x)| \le \epsilon \text{ for all } x \in \{0,1\}^n
\bigr\}.
\]
When \( \epsilon = 1/3 \), we write \( \adeg(f) := \adeg_{1/3}(f) \).

\begin{theorem}[{\cite[Theorem~10, Section~3.4]{BunT22}}]
\label{thm:error-reduction}
Let \( f : \{0,1\}^n \to \{0,1\} \) be a Boolean function. For any
\( 0 < \epsilon < 1/2 \),
\[
    \adeg_\epsilon(f)
    = O\bigl(\adeg(f) \cdot \log(1/\epsilon)\bigr).
\]
\end{theorem}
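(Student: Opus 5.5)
The plan is standard error reduction by amplification: compose a $1/3$-approximating polynomial for $f$ with a univariate amplifying polynomial. Let $Q$ be a polynomial of degree $d=\adeg(f)$ with $|Q(x)-f(x)|\le 1/3$ for all $x\in\{0,1\}^n$. Then $Q(x)\in[-1/3,1/3]$ when $f(x)=0$, and $Q(x)\in[2/3,4/3]$ when $f(x)=1$. We need a univariate polynomial $p$ of degree $k=O(\log(1/\epsilon))$ such that
\[
p(t)\in[0,\epsilon] \text{ for } t\in[-1/3,1/3], \qquad p(t)\in[1-\epsilon,1] \text{ for } t\in[2/3,4/3].
\]
Given such a $p$, the polynomial $p\circ Q$ has degree at most $dk$ and approximates $f$ within $\epsilon$ pointwise. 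After multilinearizing (replacing $x_i^2$ by $x_i$, which preserves values on the cube and does not raise degree), this gives $\adeg_\epsilon(f)\le dk=O(\adeg(f)\log(1/\epsilon))$.

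To construct $p$, first rescale by an affine map $t\mapsto (3t-1)/\ldots$ so that the two intervals become $[-1,-\delta]$ and $[\delta,1]$ for a constant $\delta>0$. Take $p$ to be the majority-amplification polynomial
\[
A_k(s)=\Pr[\mathrm{Bin}(k,s)\ge k/2],
\]
a polynomial in $s$ of degree $k$. Set $s=(1+u)/2$, where $u$ is the rescaled variable, so $s$ lies in $[0,1/2-c]$ or $[1/2+c,1]$ for a constant $c>0$. By Chernoff, $A_k(s)\le e^{-2c^2k}$ in the first case and $A_k(s)\ge 1-e^{-2c^2k}$ in the second. Since $s\in[0,1]$ throughout, $A_k$ is automatically valued in $[0,1]$. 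Choosing $k=O(\log(1/\epsilon))$ gives error at most $\epsilon$.

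The main point to check, though it is not hard, is that $Q$ may take values slightly outside $[0,1]$, namely in $[-1/3,4/3]$. The rescaling must therefore map $[-1/3,4/3]$ into $[0,1]$ while keeping the images of $[-1/3,1/3]$ and $[2/3,4/3]$ on opposite sides of $1/2$ with a constant gap. The map $s=(t+1/3)/(5/3)$ does this: it sends $1/3$ to $2/5$ and $2/3$ to $3/5$, giving $c=1/10$.
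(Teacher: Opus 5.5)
Your amplification argument is correct and is essentially the standard error-reduction proof behind the Bun--Thaler result, which the paper cites without proof: you compose a $1/3$-approximator with the majority polynomial $A_k$, $k=O(\log(1/\epsilon))$, after an affine rescaling, then multilinearize. Drop the garbled intermediate map $t\mapsto(3t-1)/\ldots$ in favour of your explicit $s=(t+1/3)/(5/3)$. Note also that the paper invokes the theorem to turn a $0.44$-approximator into a $1/3$-approximator, and your argument covers that case too with a smaller gap constant $c$.
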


\begin{theorem}[{\cite[Theorem~4]{aaronson2021degree}}]
\label{thm:degree-apxdegree}
For every Boolean function \( f : \{0,1\}^n \to \{0,1\} \),
\[
    \deg(f) = O\bigl(\adeg(f)^2\bigr).
\]
\end{theorem}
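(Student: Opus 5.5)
We would follow the route of Aaronson, Ben-David, Kothari, Rao and Tal. The idea is to pass through the \emph{spectral sensitivity} $\lambda(f)$, which we define as the spectral norm of the adjacency matrix $A_f$ of the sensitivity graph of $f$. That graph has vertex set $\{0,1\}^n$, and $x\sim x^{\oplus i}$ is an edge whenever $f(x)\neq f(x^{\oplus i})$. The theorem then splits into two inequalities,
\[
\deg(f)\le \lambda(f)^2 \qquad\text{and}\qquad \lambda(f)=O\bigl(\adeg(f)\bigr),
\]
and combining them gives $\deg(f)=O(\adeg(f)^2)$.

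\textbf{First inequality.} We would take this from Huang's proof of the sensitivity conjecture. Since $\deg(f)=\deg(f\oplus \mathrm{parity})$ up to the usual flip between degree $d$ and degree $n-d$, we may pass to the restriction onto a maximal monomial. So assume $f$ has full degree $d$ on $d$ variables. Then $\hat g(\emptyset)\neq 0$ for $g(x)=f(x)(-1)^{|x|}$, so one colour class of $g$ has size larger than $2^{d-1}$. Huang's signed adjacency matrix $B$ of $Q_d$ satisfies $B^2=dI$. The interlacing argument shows that the induced subgraph $H$ on this colour class has $\|A_H\|\ge\sqrt d$. Finally, edges inside a colour class of $g$ are exactly sensitive edges of $f$, so $\lambda(f)\ge\|A_H\|\ge\sqrt{d}$.

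\textbf{Second inequality.} This is the main obstacle. Let $p$ have degree $k=\adeg(f)$ with $|p-f|\le 1/3$ pointwise. By Perron--Frobenius, pick a nonnegative unit eigenvector $u$ of $A_f$ with eigenvalue $\lambda(f)$. Every sensitive edge has $|p(x)-p(x^{\oplus i})|\ge 1/3$, so
\[
\lambda(f)=\sum_{x\sim_f y}u_xu_y\le 3\sum_{x\sim_f y}u_xu_y\,|p(x)-p(y)|.
\]
Choosing signs according to $f$ turns this into $O(1)\cdot|\langle u,[A_{Q_n}\circ S,D_p]u\rangle|$, where $D_p$ is the diagonal matrix of $p$ and $S$ records signs on edges. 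We would instead write the right-hand side as $\bigl|\sum_i\langle u, [X_i,D_p]\,\sigma_i u\rangle\bigr|$, with $X_i$ the bit-flip operators and $\sigma_i$ diagonal sign matrices. We then need a Bernstein--Markov type bound: for multilinear $p$ of degree $k$ with $\|p\|_\infty\le 4/3$, this commutator form has operator norm $O(k)$. We expect this to be the delicate step.

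To prove it, we would first try the following. Write $D_p$ in the Fourier basis. The map $t\mapsto e^{-itN}D_pe^{itN}$, with $N=\sum_i X_i$ the hypercube adjacency, is a trigonometric polynomial of degree $\le 2k$ in $t$, because each character $\chi_S$ with $|S|\le k$ only changes the eigenvalue of $N$ by at most $2|S|$. The commutator $[N,D_p]$ is its derivative at $t=0$. The classical Bernstein inequality for operator-valued trigonometric polynomials then gives $\|[N,D_p]\|\le 2k\|D_p\|=O(k)$. The remaining technical point is to absorb the edge signs $\sigma_i$. We would handle this by replacing $N$ with the signed operator $\sum_i X_i\sigma_i'$ adapted to $f$, or by symmetrizing over the sign pattern, while checking that the degree bound in $t$ survives. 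Together with the first inequality, this yields $\deg(f)\le\lambda(f)^2=O(\adeg(f)^2)$.
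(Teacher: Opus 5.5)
Your overall route is the one behind the cited result of Aaronson, Ben-David, Kothari, Rao and Tal; the paper itself gives no proof and only cites it. The idea is to sandwich the spectral sensitivity as $\sqrt{\deg(f)}\le\lambda(f)=O(\adeg(f))$.

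The first inequality is fine up to small slips. The colour classes should be those of the $\pm1$-valued function $g=(-1)^{f}\chi_{[d]}$, not of $f(x)(-1)^{|x|}$. The remark about $f\oplus\mathrm{parity}$ is not needed. You should also say why passing to a maximal monomial is harmless. Fix the variables outside a maximum monomial $S$ to $0$, which keeps the coefficient of $S$. The sensitivity graph of the restriction is an induced subgraph of that of $f$, so $\lambda$ does not increase. Your operator-valued Bernstein argument giving $\|[N,D_p]\|\le 2k\|D_p\|$ is also correct.

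The gap is the step you flagged yourself. You reduce to $\sum_i\langle u,[X_i,D_p]\sigma_i u\rangle$ with direction-dependent signs, and neither proposed repair works:
\begin{itemize}
\item For a signed operator $N'=\sum_iX_i\sigma_i'$ adapted to an arbitrary $f$, the characters are in general no longer eigenvectors. So nothing makes $e^{-itN'}D_pe^{itN'}$ a low-degree trigonometric polynomial.
\item Averaging over sign patterns would tend to cancel the very positive quantity you want to bound.
\end{itemize}
You also do not say how the non-sensitive edges, which appear in every $[X_i,D_p]$, drop out.

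The missing observation is that a single vertex sign, independent of $i$, does everything. Let $\Sigma=\mathrm{diag}\bigl((-1)^{f(x)}\bigr)$ and $C=[N,D_p]$. Then $C_{x,x^{\oplus i}}=p(x^{\oplus i})-p(x)$ is antisymmetric, and
\[
\langle u,\Sigma Cu\rangle=\sum_{\{x,y\}\in E(Q_n)}u_xu_y\bigl(p(y)-p(x)\bigr)\bigl((-1)^{f(x)}-(-1)^{f(y)}\bigr).
\]
Non-sensitive edges contribute $0$. On a sensitive edge, the sign of $p(y)-p(x)$ is forced by $f$, so the term equals $2u_xu_y|p(x)-p(y)|\ge\tfrac23u_xu_y$. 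Since $\lambda(f)=\langle u,A_fu\rangle$ is twice the sum of $u_xu_y$ over unordered sensitive edges, this gives $\lambda(f)\le3\langle u,\Sigma Cu\rangle\le3\|C\|$, using $\|\Sigma\|=1$.

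Replacing $p$ by $p-\tfrac12$ leaves $C$ unchanged and gives $\|C\|\le 2k\cdot\tfrac56$. Hence $\lambda(f)\le5k$ and $\deg(f)\le25\,\adeg(f)^2$, with no need to modify $N$.
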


\paragraph*{Fourier basis}
Another fundamental representation of Boolean functions is given by the
\emph{Fourier basis}, where each monomial $\chi_S(x) = (-1)^{\sum_{i \in S} x_i}$ represents the $\pm1$-valued parity function on the subset $S \subseteq [n]$ of variables.

\begin{definition}[Fourier complexity measures]
Let \( f : \{0,1\}^n \to \mathbb{R} \) have the Fourier expansion
\[
  f(x) = \sum_{S \subseteq [n]} \widehat{f}(S)\, \chi_S(x).
\]
The \emph{Fourier degree} of \( f \) is the maximum size of a set
\( S \subseteq [n] \) with \( \widehat{f}(S) \neq 0 \).
Since
\( \chi_S(x) = \prod_{i \in S} (1 - 2x_i) \),
the Fourier degree coincides with the ordinary degree.

The \emph{Fourier sparsity} of \( f \), denoted \( \fsparsity{f} \), is the number
of nonzero Fourier coefficients \( \widehat{f}(S) \).
The \emph{Fourier \( \ell_1 \)-norm} of \( f \) is
\[
  \fourierwt{f} := \sum_{S \subseteq [n]} |\widehat{f}(S)|.
\]
For \( t \ge 0 \), the \emph{Fourier \( \ell_1 \)-mass above level \( t \)} is
\[
  \fourierwt{f}^{\ge t}
  := \sum_{\substack{S \subseteq [n]\\ |S| \ge t}} |\widehat{f}(S)|.
\]
\end{definition}

\paragraph*{Communication complexity}
We assume familiarity with the standard model of communication complexity and
refer the reader to \cite{KN-CC-book} for background.
In this model, two parties—Alice and Bob—aim to compute a Boolean function
\( F : X \times Y \to \{0,1\} \), where Alice receives \( x \in X \) and Bob
receives \( y \in Y \).
The \emph{deterministic communication complexity} of \( F \), denoted
\( \Dcc(F) \), is the minimum number of bits exchanged by any deterministic
protocol that always outputs \( F(x,y) \).
In the public-coin randomized model, Alice and Bob have access to shared
randomness and must compute \( F(x,y) \) with error at most \( 1/3 \); the
corresponding measure is the \emph{randomized communication complexity}
\( \Rcc(F) \).

We also assume familiarity with quantum communication complexity
\cite{de2002quantum}.
We use \( \Qcc(F) \) to denote the bounded-error (error at most \( 1/3 \)) quantum
communication complexity of \( F \) in the model with unlimited shared
entanglement.

\paragraph*{The $\gamma_2$ and approximate $\gamma_2$ norms}

\begin{definition}[$\gamma_2$ and approximate $\gamma_2$ norms]
Let \( F : X \times Y \to \{0,1\} \) be a Boolean function, and let
\( M_F \in \{0,1\}^{X \times Y} \) denote its communication matrix, defined by
\( M_F(x,y) = F(x,y) \).

The \emph{$\gamma_2$ norm} of \( F \) is
\[
\gamma_2(F)
=
\min \Bigl\{
    \sum_i |\alpha_i| :
    M_F = \sum_i \alpha_i R_i
\Bigr\},
\]
where each \( R_i \) is a \emph{combinatorial rectangle}, i.e.,
\( R_i(x,y) = g_i(x) h_i(y) \) for Boolean functions
\( g_i : X \to \{0,1\} \) and \( h_i : Y \to \{0,1\} \).

For \( 0 < \epsilon < 1/2 \), the \emph{\(\epsilon\)-approximate $\gamma_2$ norm}
is
\[
\gamma_2^\epsilon(F)
=
\min \Bigl\{
    \gamma_2(A) :
    A \in \mathbb{R}^{X \times Y},\;
    \|A - M_F\|_\infty \le \epsilon
\Bigr\}.
\]
We write \( \agamma(F) := \gamma_2^{1/3}(F) \).
\end{definition}

\begin{remark}
The $\gamma_2$ norm is often defined via a factorization-based formulation.
For a real matrix \( M \), one may equivalently define
\[
\gamma_2(M)
=
\min_{X,Y : XY^\top = M} r(X)\, r(Y),
\]
where \( r(X) \) denotes the maximum \( \ell_2 \)-norm of a row of \( X \).
This formulation, sometimes called the \emph{$\mu$-norm}, is equivalent to the
rectangle-based definition up to constant factors; see, e.g.,
Chapter~2 of \cite{lee2009lower}.
We use the rectangle-based definition throughout.
\end{remark}

\begin{theorem}[{\cite[Theorem~1]{linial2007lower}}]
\label{thm:agamma-qcc}
For every Boolean function \( F : X \times Y \to \{0,1\} \),
\[
\Rcc(F) = \Omega(\log \agamma(F))
\qquad\text{and}\qquad
\Qcc(F) = \Omega(\log \agamma(F)).
\]
\end{theorem}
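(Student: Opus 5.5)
The plan is to convert any bounded-error protocol of cost $c$ for $F$ into a real matrix $P$ that approximates $M_F$ entrywise and has a factorization witnessing $\gamma_2(P) \le 2^{O(c)}$. Since $\Rcc(F) \ge \Qcc(F)$ up to constants (a classical public-coin protocol is a special case of a quantum protocol with shared entanglement), it suffices to prove the quantum bound. The randomized bound also follows directly: a public-coin protocol of cost $c$ is a distribution over deterministic protocols. Each of these partitions $M_F$ into at most $2^c$ rectangles, so its output matrix has rectangle weight at most $2^c$. Averaging gives a matrix within $1/3$ of $M_F$ of weight at most $2^c$, by convexity of $\gamma_2$.

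For the quantum case I would work with the factorization formulation from the remark, $\gamma_2(M)=\min_{XY^\top=M} r(X)r(Y)$, and translate back to the rectangle-based definition at the end via the stated constant-factor equivalence.

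\begin{enumerate}
\item \emph{Classical messages.} Replace each qubit message by teleportation. This uses extra shared EPR pairs (free in this model) and two classical bits per qubit. We obtain an equivalent protocol with $2c$ classical bits of communication plus shared entanglement $|\psi\rangle$.
\item \emph{Transcript decomposition.} In the new protocol, for each transcript $t\in\{0,1\}^{2c}$ ending in acceptance, Alice's operations are a product of unitaries and measurement operators depending only on $x$ and $t$. The same holds for Bob with $y$ and $t$. Collapsing these gives contractions $A_{x,t}$ and $B_{y,t}$. The probability of transcript $t$ (and acceptance) is then $\|(A_{x,t}\otimes B_{y,t})|\psi\rangle\|^2$.
\item \emph{Rank-one form.} I will rewrite each such term as an inner product of an $x$-vector and a $y$-vector, each of norm at most $1$. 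Concretely,
\[
\|(A_{x,t}\otimes B_{y,t})\psi\|^2=\langle (A_{x,t}^\dagger A_{x,t}\otimes I)\psi,\ (I\otimes B_{y,t}^\dagger B_{y,t})\psi\rangle .
\]
Both vectors have norm at most $1$, since the operators are contractions and $\psi$ is a unit vector.
\item \emph{Summing over transcripts.} Let $P(x,y)=\sum_t\langle u_{x,t},v_{y,t}\rangle$ be the acceptance probability. Concatenating the $u_{x,t}$ (resp.\ $v_{y,t}$) over all $t$ gives row vectors of norm at most $2^{c}$ each, so $\gamma_2(P)\le 2^{2c}$.
\item \emph{Conclusion.} Correctness with error $1/3$ means $\|P-M_F\|_\infty\le 1/3$. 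Hence $\agamma(F)\le O(2^{2c})$, i.e.\ $c=\Omega(\log\agamma(F))$.
\end{enumerate}

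The main obstacle is step 2: justifying that, with measurements interleaved and entanglement present, the acceptance probability really splits as a sum over classical transcripts of terms of product form. This is where I would be careful. Intermediate measurements must be deferred or treated explicitly as Kraus operators, and the operators $A_{x,t}$ must be shown to depend only on Alice's input and the transcript. A further technical point is that the factorization vectors may be complex. This is handled by splitting into real and imaginary parts, which at most doubles the dimension and preserves the norms, because $P$ is real.
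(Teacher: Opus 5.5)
Your proof is correct. It is the standard protocol-to-factorization argument behind the cited Linial--Shraibman theorem; the paper itself gives no proof, only the citation and the remark that the quantum case follows by unpacking the protocol. Teleporting to classical messages, writing each accepting transcript's probability as $\langle (A_{x,t}^\dagger A_{x,t}\otimes I)\psi,\,(I\otimes B_{y,t}^\dagger B_{y,t})\psi\rangle$, summing, and passing to the rectangle-based norm via the equivalence in the remark all go through. Your crude count over $2^{2c}$ transcripts yields $\gamma_2(P)\le 2^{2c}$, which is weaker than the sharper exponent in the original result, but that is harmless for the $\Omega(\cdot)$ form stated here.
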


\section{Semi-Adaptive Max-Degree Restriction Trees from Sparsity}
\label{sec:rest-tree-construction}

\begin{definition}[Restrictions]
A \emph{restriction} \( \rho \) on a set of variables \( V \subseteq \{x_1, \dots, x_n\} \) is a partial assignment
\(
    \rho : V \to \{0,1,*\},
\)
where for \( x_i \in V \), \( \rho(x_i) \in \{0,1\} \) indicates that \( x_i \) is fixed, and \( \rho(x_i) = * \) means \( x_i \) is left free. 
For a polynomial \( Q \in \mathbb{R}[x_1, \dots, x_n] \), we write \( Q|_\rho \) for the polynomial obtained by substituting \( x_i = \rho(x_i) \) for all fixed variables \( x_i \).
\end{definition}

A central tool in our work is to extract structural consequences of a function
having large polynomial sparsity. In particular, we seek restrictions under which
a function remains maximally hard, in the sense of retaining full degree.

At a high level, large sparsity guarantees the existence of many restrictions
under which the function retains full degree. More concretely, if a multilinear
polynomial has sparsity \( s \), then there exists a set of variables \( V \) of
size \( \Omega(\log s / \log n) \) such that, for every assignment of the variables
in \( V \) to values in \( \{0,*\} \), one can fix the remaining variables so that
the restricted polynomial has full degree in the variables left free.

We capture this collection of restrictions using what we call
\emph{semi-adaptive max-degree restriction trees}.
The term \emph{semi-adaptive} reflects the following structure.
There is a fixed set of variables \( V \) such that every assignment in
\( \{0,*\}^V \) appears as a restriction, rather than variables being chosen
adaptively based on previous assignments, as in fully adaptive restriction trees
(e.g., in the work of~\cite{chattopadhyay2026restriction}).
However, the restrictions are not fully non-adaptive: although the set \( V \) is
fixed, for each assignment \( \rho \in \{0,*\}^V \), the fixing of the remaining
variables in \( [n]\setminus V \) that ensures full degree may depend on
\( \rho \). This intermediate structure motivates the term \emph{semi-adaptive}.

The qualifier \emph{max-degree} indicates that under every restriction in the
tree, the polynomial retains full degree on the variables that remain free.
This notion compactly encodes the key structural consequence of large sparsity
that we exploit later.

We now formalize this notion.

\begin{definition}[Max-degree restriction]
Let $Q \in \mathbb{R}[x_1,\dots,x_n]$ be a nonzero multilinear polynomial and let
$\rho : \{x_1,\dots,x_n\} \to \{0,1,*\}$ be a restriction.
We say that $\rho$ is a \emph{max-degree restriction} of $Q$ if
\(
\deg(Q|_\rho) = |\rho^{-1}(*)|,
\)
that is, the restricted polynomial has full degree in its free variables.
\end{definition}

\begin{definition}[Semi-adaptive restriction tree]
A \emph{semi-adaptive restriction tree of depth $d$} on $n$ variables
is a collection $\mathcal{D}$ of $2^d$ restrictions
$\rho : \{x_1,\dots,x_n\} \to \{0,1,*\}$
for which there exists a fixed set of variables
$V \subseteq \{x_1,\dots,x_n\}$ with $|V| = d$,
called the \emph{core variables}, such that:
\begin{itemize}
    \item For every $\rho \in \mathcal{D}$ and every $x_i \notin V$,
    $\rho(x_i) \in \{0,1\}$.
    \item For every assignment $\alpha \in \{0,*\}^V$, there exists a unique
    $\rho \in \mathcal{D}$ such that $\rho(x_i) = \alpha(x_i)$ for all $x_i \in V$.
\end{itemize}
Equivalently, $\mathcal{D}$ consists of all restrictions obtained by assigning each
variable in $V$ either $0$ or $*$, while fixing all variables outside $V$ as a
function of this assignment.
\end{definition}

\begin{definition}[Semi-adaptive max-degree restriction tree]
Let $Q \in \mathbb{R}[x_1,\dots,x_n]$ be a multilinear polynomial and let
$\mathcal{D}$ be a semi-adaptive restriction tree.
We say that $\mathcal{D}$ is a \emph{max-degree restriction tree} for $Q$ if every
$\rho \in \mathcal{D}$ is a max-degree restriction of $Q$, i.e.,
\(
\deg(Q|_\rho) = |\rho^{-1}(*)|.
\)
\end{definition}

\paragraph*{Examples}
To illustrate the definitions, we give two informative examples.

\begin{example}
\( \OR_n \).
The function \( \OR_n(x_1,\dots,x_n) \) has sparsity \( 2^n-1 \).
It admits a semi-adaptive max-degree restriction tree of depth \( n \) with core
variables \( V=\{x_1,\dots,x_n\} \).
Let \( \mathcal{D}=\{\,\rho:\{x_1,\dots,x_n\}\to\{0,*\}\,\} \).
For any \( \rho\in\mathcal{D} \), the restricted function \( \OR_n|_{\rho} \) is
an $\OR$ over the free variables \( \rho^{-1}(*) \), and hence
\( \deg(\OR_n|_{\rho})=|\rho^{-1}(*)| \).
Thus \( \mathcal{D} \) is a semi-adaptive max-degree restriction tree for \( \OR_n \).
\end{example}

\begin{example} \( \AND_n\circ\OR_2 \).
Consider
\[
(\AND_n\circ\OR_2)(x_1,\dots,x_n,y_1,\dots,y_n)
= \AND_n(\OR_2(x_1,y_1),\dots,\OR_2(x_n,y_n)),
\]
which has sparsity \( 3^n \).
We construct a semi-adaptive max-degree restriction tree of depth \( n \) with
core variables
\(
V = \{x_1,\dots,x_n\}.
\)
For a restriction \( \rho_x \in \{0,*\}^{\{x_1,\dots,x_n\}} \), define a
restriction \( \rho_y^{\rho_x} \) on \( \{y_1,\dots,y_n\} \) by
\[
\rho_y^{\rho_x}(y_i) \;=\;
\begin{cases}
0, & \text{if } \rho_x(x_i) = *, \\
1, & \text{if } \rho_x(x_i) = 0.
\end{cases}
\]
Set
\(
\mathcal{D}
\;=\;
\bigl\{\, \rho_x \cup \rho_y^{\rho_x} \;\big|\;
\rho_x \in \{0,*\}^{\{x_1,\dots,x_n\}} \,\bigr\}.
\)
For any \( \rho \in \mathcal{D} \), each gate \( \OR_2(x_i,y_i) \) evaluates to
a free variable when \( \rho(x_i)=* \), and to the constant \( 1 \) when
\( \rho(x_i)=0 \).
As a result, \( (\AND_n \circ \OR_2)|_{\rho} \) computes an $\AND$ over exactly the
free core variables \( \rho^{-1}(*) \), and therefore
\(
\deg\bigl((\AND_n \circ \OR_2)|_{\rho}\bigr)
= |\rho^{-1}(*)|.
\)
Thus, \( \mathcal{D} \) is a semi-adaptive max-degree restriction tree of depth
\( n \) for \( \AND_n \circ \OR_2 \).
\end{example}

\begin{figure}[ht]
\centering
\begin{tikzpicture}[
  level distance=15mm,
  level 1/.style={sibling distance=60mm},
  level 2/.style={sibling distance=30mm},
  level 3/.style={sibling distance=15mm},
  every node/.style={font=\small, inner sep=2pt},
  edge from parent/.style={draw, -{Latex[length=2mm,width=1.5mm]}, thin},
  leaf/.style={rectangle, draw=black!60, rounded corners, fill=black!5, inner sep=4pt}
]

\node (root) [circle, draw] {$x_1$}
  child { 
    node [circle, draw] {$x_2$}
      child {
        node [circle, draw] {$x_3$}
         child { node [leaf, name=L11] {\shortstack{\((0,0,0)\)\\\((1,1,1)\)}}
         edge from parent node[left] {$0$ } }
          child { node [leaf] {\shortstack{\((0,0,*)\)\\\((1,1,0)\)}} edge from parent node[right] {$*$} }
        edge from parent node[left] {$0$}
      }
      child {
        node [circle, draw] {$x_3$}
          child { node [leaf] {\shortstack{\((0,*,0)\)\\\((1,0,1)\)}} edge from parent node[left] {$0$} }
          child { node [leaf] {\shortstack{\((0,*,*)\)\\\((1,0,0)\)}}
          edge from parent node[right] {$*$} }
        edge from parent node[right] {$*$}
      }
    edge from parent node[left] {$0$}
  }
  child { 
    node [circle, draw] {$x_2$}
      child {
        node [circle, draw] {$x_3$}
          child { node [leaf] {\shortstack{\((*,0,0)\)\\\((0,1,1)\)}} edge from parent node[left] {$0$} }
          child { node [leaf] {\shortstack{\((*,0,*)\)\\\((0,1,0)\)}} edge from parent node[right] {$*$} }
        edge from parent node[left] {$0$}
      }
      child {
        node [circle, draw] {$x_3$}
          child { node [leaf] {\shortstack{\((*,*,0)\)\\\((0,0,1)\)}} edge from parent node[left] {$0$} }
          child { node [leaf] {\shortstack{\((*,*,*)\)\\\((0,0,0)\)}}
          edge from parent node[right] {$*$} }
        edge from parent node[right] {$*$}
      }
    edge from parent node[right] {$*$}
  };

\node[font=\small, align=right, anchor=east] at ([xshift=-8mm]L11.center) {x:\\[3pt]y:};
\end{tikzpicture}
\caption{
A semi-adaptive max-degree restriction tree for \( \AND_3 \circ \OR_2 \) with core variables \( V = \{x_1,x_2,x_3\} \). Leaves correspond to the \(2^{|V|}\) restrictions in the tree. For each leaf restriction, the restricted function \( (\AND_3 \circ \OR_2)|_\rho \) computes an \(\AND\) over exactly the variables left free.}
\label{fig:or3-rest-tree}
\end{figure}

The following lemma shows that large sparsity guarantees the existence of deep
semi-adaptive max-degree restriction trees.

\begin{lemma}\label{lemma:spars-nonadapt-res-tree}
Let $Q : \{0,1\}^n \to \mathbb{R}$ be a nonzero multilinear polynomial of sparsity $s$.
Then $Q$ admits a semi-adaptive max-degree restriction tree of depth
$\Omega(\log s / \log n)$.
\end{lemma}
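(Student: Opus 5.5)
The plan is to reduce the lemma to a combinatorial statement about the support of $Q$ and then apply the Sauer--Shelah lemma. Let $\mathcal{F} = \{ S \subseteq [n] : a_S \neq 0 \}$ be the family of monomials of $Q$, so $|\mathcal{F}| = s$. Recall that $\mathcal{F}$ \emph{shatters} a set $V \subseteq [n]$ if for every $T \subseteq V$ there is some $S \in \mathcal{F}$ with $S \cap V = T$. I will take the core variables of the tree to be a set $V$ shattered by $\mathcal{F}$. I will then show that shattering is exactly what is needed to complete every $\alpha \in \{0,*\}^V$ to a max-degree restriction.

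\textbf{Step 1 (a large shattered set).} By the Sauer--Shelah lemma, if $\mathcal{F}$ shatters no set of size $d$, then
\[
s = |\mathcal{F}| \le \sum_{i<d} \binom{n}{i} \le (n+1)^{d}.
\]
Hence $\mathcal{F}$ shatters some $V$ with $|V| = d \ge \log s / \log(n+1) - O(1)$, which is $\Omega(\log s/\log n)$. In the degenerate case $s=1$ the depth-$0$ tree works trivially, since $Q$ is then a single monomial and setting all its variables to $1$ gives a nonzero constant.

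\textbf{Step 2 (completing each $\alpha$).} Fix $\alpha \in \{0,*\}^V$ and let $T = \alpha^{-1}(*)$. Set the variables in $V \setminus T$ to $0$; this kills every monomial $S$ with $S \cap (V\setminus T) \neq \emptyset$. Write the surviving part as a polynomial in the variables of $T$ whose coefficients are polynomials in $W = [n]\setminus V$. The coefficient of $\prod_{i\in T} x_i$ is
\[
P_T(x_W) = \sum_{\substack{S \in \mathcal{F}\\ S \cap V = T}} a_S \prod_{i \in S\setminus V} x_i .
\]
The index sets $S\setminus V$ appearing here are distinct, so $P_T$ is a nonzero multilinear polynomial in $x_W$ exactly when some $S\in\mathcal{F}$ satisfies $S\cap V = T$, and shattering guarantees this.

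A nonzero multilinear polynomial is nonzero at some Boolean point, so we can choose $\beta \in \{0,1\}^W$ with $P_T(\beta)\neq 0$. Let $\rho$ assign $\alpha$ on $V$ and $\beta$ on $W$. Then $Q|_\rho$ is a polynomial in the variables of $T$ only, and its top coefficient $P_T(\beta)$ is nonzero. Therefore $\deg(Q|_\rho) = |T| = |\rho^{-1}(*)|$.

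\textbf{Step 3 (assembling the tree).} Collecting one such $\rho$ for each $\alpha$ gives a family of $2^{d}$ restrictions. Every restriction fixes all variables outside $V$, and every $\alpha\in\{0,*\}^V$ is realized exactly once. This is precisely a semi-adaptive max-degree restriction tree of depth $d$.

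The only step requiring real insight is recognizing that the right condition on $V$ is shattering of the support, which makes Sauer--Shelah the natural tool. The remaining checks are routine: the coefficient extraction, and the existence of a nonvanishing Boolean point.
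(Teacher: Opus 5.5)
Your proposal is correct and is essentially the paper's proof. Both take a largest set $V$ shattered by the monomial supports via Sauer--Shelah--Perles, zero out $V\setminus T$, and use shattering to guarantee that the top coefficient $P_T$ (the paper's $R_S$) is a nonzero polynomial in the non-core variables, which a Boolean fixing then witnesses. One small detail to make explicit: absorbing the additive $O(1)$ into $\Omega(\log s/\log n)$ uses $d \ge 1$ whenever $s \ge 2$, and this holds because two distinct supports already shatter a singleton.
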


\begin{proof}
Let $\mathcal{M}_Q \subseteq 2^{\{x_1,\dots,x_n\}}$ denote the family of supports of
monomials appearing in $Q$. Let $V \subseteq \{x_1,\dots,x_n\}$ be a largest set
shattered by $\mathcal{M}_Q$. Recall that $V$ being shattered by $\mathcal{M}_Q$ means that for every $S \subseteq V$, there exists $T \in \mathcal{M}_Q$ such that $T \cap V = S$. The Sauer--Shelah--Perles lemma \cite{sauer1972density, shelah1972combinatorial} states that that for a set system $\mathcal{F}$ containing subsets of $[n]$, if $d$ is the maximum size of a set shattered by $\mathcal{F}$, then $|\mathcal{F}| \leq O(n^d)$. Using this lemma, we obtain
$|V| = \Omega(\log s / \log n)$; write $|V| = d$.

For each subset $S \subseteq V$, we construct a restriction $\rho_S$.
The resulting family $\{\rho_S : S \subseteq V\}$ will form a semi-adaptive
max-degree restriction tree for $Q$.

Fix an arbitrary $S \subseteq V$. Define the \emph{free restriction}
$\rho_S^{\mathrm{free}} : V \to \{0,*\}$ by setting
$\rho_S^{\mathrm{free}}(x)=*$ for all $x \in S$ and
$\rho_S^{\mathrm{free}}(x)=0$ for all $x \in V \setminus S$.
Under this restriction, every monomial of $Q$ containing a variable from
$V \setminus S$ vanishes. As a result, the restricted polynomial can be written as
\(
Q|_{\rho_S^{\mathrm{free}}}
= \sum_{T \subseteq S} \Bigl(\prod_{x \in T} x\Bigr)\cdot R_T,
\)
where each $R_T$ is a multilinear polynomial over the variables
$\{x_i : x_i \notin V\}$.

Since $V$ is shattered by $\mathcal{M}_Q$, there exists a monomial of $Q$ whose
support intersects $V$ exactly in $S$. Equivalently, the coefficient polynomial
$R_S$ is nonzero. We now fix the remaining variables to witness this nonzeroness.
Choose a \emph{fixing restriction}
$\rho_S^{\mathrm{fix}} : \{x_1,\dots,x_n\} \setminus V \to \{0,1\}$
such that $R_S|_{\rho_S^{\mathrm{fix}}} \neq 0$.

Let $\rho_S := \rho_S^{\mathrm{free}} \cup \rho_S^{\mathrm{fix}}$. Then
$Q|_{\rho_S}$ has degree exactly $|S|$, and its set of free variables is
$\rho_S^{-1}(*) = S$. Hence, $\rho_S$ is a max-degree restriction of $Q$.

Therefore, the family $\{\rho_S : S \subseteq V\}$ forms a semi-adaptive
max-degree restriction tree of depth $|V| = \Omega(\log s / \log n)$.
\end{proof}

\section{Lifting with the \texorpdfstring{$\AND_2$}{AND2} Gadget}
\label{sec:lifting}

In this section, we present our main technical contribution: a lifting theorem
that lifts the sparsity of a Boolean function
\( f : \{0,1\}^n \to \{0,1\} \)
into a lower bound on the approximate \( \gamma_2 \) norm of the lifted function
\( F := f \circ \AND_2 \).
We begin with a high-level overview of the proof.

\paragraph*{Proof overview}
Let \( f : \{0,1\}^n \to \{0,1\} \) be a Boolean function of sparsity \( s \).
Our goal is to lower bound the approximate \( \gamma_2 \) norm of the lifted function
\( f \circ \AND_2 \) in terms of \( \log s \).
Suppose, toward a contradiction, that \( f \circ \AND_2 \) admits a small approximate
\( \gamma_2 \) norm. Then there exists an approximator
\[
\Pi(x,y) = \sum_{i=1}^m b_i\, g_i(x) h_i(y)
\]
with \( \sum_i |b_i| \) small.
Our goal is to show that such an approximator cannot exist.
The proof proceeds via a carefully designed random restriction argument that preserves the hardness of \( f \circ \AND_2 \), in terms of degree, while simplifying any
such approximator.

\medskip
\noindent
\textbf{Step 1: Structure from sparsity.}
A key consequence of large sparsity is that
\( f \) admits a \emph{semi-adaptive max-degree restriction tree} \( \mathcal{D} \) of depth
\( d = \Omega(\log s / \log n) \).
Equivalently, there exists a fixed set of core variables
\( V \subseteq \{z_1,\dots,z_n\} \), with \( |V| = d \), such that
for every assignment in \( \{0,*\}^V \),
the remaining variables can be fixed so that the restricted function
has full degree in the surviving free variables.
This provides a large family of restrictions under which \( f \) remains maximally
hard in terms of degree.

\medskip
\noindent
\textbf{Step 2: Lifting restrictions through the \( \AND_2 \) gadget.}
We lift the restriction tree \( \mathcal{D} \) for \( f \) to a collection of
restrictions \( \mathcal{D} \circ \AND_2 \) for the lifted function
\[
(f \circ \AND_2)(x_1,\dots,x_n,y_1,\dots,y_n)
= f(\AND_2(x_1,y_1),\dots,\AND_2(x_n,y_n)).
\]
Each restriction \( \rho_f \in \mathcal{D} \) is mapped to a restriction
\( \rho \) on the variables \( (x_i,y_i) \) so that
\(
\AND_2(x_i,y_i)|_{\rho} = \rho_f(z_i)
\)
for every \( i \).
Concretely, the lifted restriction is given by
\[
(\rho(x_i), \rho(y_i)) =
\begin{cases}
(\Delta, 0), & \text{if } \rho_f(z_i) = 0 \text{ and } z_i \in V,\\
(*, 1), & \text{if } \rho_f(z_i) = * \text{ and } z_i \in V,\\
(1, 1), & \text{if } \rho_f(z_i) = 1 \text{ and } z_i \notin V,\\
(1, 0), & \text{if } \rho_f(z_i) = 0 \text{ and } z_i \notin V.
\end{cases}
\]
Here both \( * \) and \( \Delta \) denote free variables.
\cref{algo:fourier-restriction} formalizes this construction.

\medskip
The lifted restrictions satisfy two immediate properties.
First, all \( y \)-variables are fixed under every lifted restriction.
Second, the lifting introduces a special type of free variable, called a
\emph{masked variable}: these are free \( x_i \)-variables assigned \( \Delta \),
whose corresponding \( y_i \) is fixed to \( 0 \); consequently, the restricted
function is independent of them.

As a result, for every lifted restriction \( \rho \), averaging over the masked
variables yields a function that coincides (up to renaming variables) with
\( f|_{\rho_f} \).
For a restriction \( \rho \), we denote the set of masked variables by
\[
M(\rho) := \{\, x_i \mid \rho(x_i) = \Delta \,\},
\]
and write \( \mathbb{E}_{x_{M(\rho)}}[\cdot] \) for expectation over independent,
uniform assignments to the variables in \( M(\rho) \).
We will use this notation throughout.

Since \( \mathcal{D} \) is a max-degree restriction tree, it follows that
\[
\deg\!\left(\mathbb{E}_{x_{M(\rho)}}[(f \circ \AND_2)|_{\rho}]\right)
= |\rho^{-1}(*)|.
\]

\medskip
\noindent
\textbf{Step 3: Random restrictions and Fourier decay.}
We place a uniform distribution over lifted restrictions
\( \rho \in \mathcal{D} \circ \AND_2 \) having exactly
\( pd \) many \( * \)-variables.
We analyze the effect of sampling such a restriction and then taking expectation
over the masked variables on the following:
\begin{enumerate}
    \item \emph{The target function \( f \circ \AND_2 \).}
    This operation preserves hardness: by construction, the resulting function
    has exact degree \( pd \).

    \item \emph{An arbitrary Boolean function \( g \) on the \( x \)-variables.}
    We show that the same operation causes the Fourier
    \( \ell_1 \)-mass above level \( k \) to decay exponentially,
    provided \( k \gtrsim p^2d \).
\end{enumerate}

To derive a contradiction later, we need the Fourier mass above level
\( \sqrt{pd} \) to be exponentially small, since this would imply an
approximating polynomial of degree \( O(\sqrt{pd}) \) for the restricted target
function. Since exponential decay is guaranteed only above level
\( p^2d \), we choose parameters so that
\(
p^2d \approx \sqrt{pd}.
\)
Solving this relation yields
\(
p = \Theta(d^{-1/3}),
\)
for which
\(
pd = \Theta(d^{2/3}) \)
and \(
\sqrt{pd} = \Theta(d^{1/3}).
\)
Thus the Fourier tail decays exponentially above the precise level needed for
the contradiction.

\medskip
\noindent
\textbf{Step 4: Deriving a contradiction.}
Applying this random restriction-and-averaging procedure to the approximator
\( \Pi \), each function \( h_i(y) \) collapses to a constant,
while the Fourier tails of the corresponding \( g_i(x) \) terms decay rapidly.
Since \( \sum_i |b_i| \) is small, we conclude that the expected Fourier mass of
\(
\mathbb{E}_{x_{M(\rho)}}[\Pi|_{\rho}]
\)
above level
\(
k = \Theta(\sqrt{pd}) = \Theta(d^{1/3})
\)
is negligible.

Discarding this high-degree mass yields a polynomial of degree
\( O(\sqrt{pd}) \) that still approximates
\(
\mathbb{E}_{x_{M(\rho)}}[(f \circ \AND_2)|_{\rho}].
\)
However, by the construction above, this target function has exact degree
\( pd \), contradicting the general fact that the exact degree of a
Boolean function is at most quadratic in its approximate degree. Choosing parameters appropriately, this contradiction implies
\[
\log \agamma(f \circ \AND_2)
= \Omega\!\left(\left(\frac{\log s}{\log n}\right)^{1/3}\right),
\]
completing the proof.
\subsection{Lifting De Morgan Sparsity to the Approximate \texorpdfstring{$\gamma_2$}{gamma2}-Norm via the \texorpdfstring{$\AND_2$}{AND2} Gadget} \label{sec:lifted-restrictions}

We now formalize the proof strategy outlined in the proof overview.
The first step is to lift max-degree restrictions for \( f \)
to a collection of restrictions for the lifted function \( f \circ \AND_2 \).\cref{algo:fourier-restriction} describes this lifting procedure.

\begin{algorithm}[H]
\caption{\sampleliftedrho}
\label{algo:fourier-restriction}
\begin{algorithmic}[1]
\State \textbf{Input:} A semi-adaptive max-degree restriction tree
\( \mathcal{D} \) for \( f : \{0,1\}^n \to \{0,1\} \)
\State \textbf{Output:} A collection of lifted restrictions
\( \mathcal{D} \circ \AND_2 \), where each
\( \rho : \{x_i,y_i\}_{i=1}^n \to \{0,1,*,\Delta\} \)
is a restriction for \( f \circ \AND_2 \)

\State Let $V$ be the set of \emph{core variables} of
the semi-adaptive restriction tree \( \mathcal{D} \).

\For{each restriction $\rho_f \in \mathcal{D}$}
    \State Define the lifted restriction $\rho = \lift_{\mathcal{D}}(\rho_f)$ as follows:
    \For{each $i \in [n]$}
        \State Set
        \(
        (\rho(x_i), \rho(y_i)) =
        \begin{cases}
            (\Delta, 0), & \text{if } \rho_f(z_i) = 0 \text{ and } z_i \in V,\\
            (*, 1), & \text{if } \rho_f(z_i) = * \text{ and } z_i \in V,\\
            (1, 1), & \text{if } \rho_f(z_i) = 1 \text{ and } z_i \notin V,\\
            (1, 0), & \text{if } \rho_f(z_i) = 0 \text{ and } z_i \notin V.
        \end{cases}
        \)
    \EndFor
\EndFor

\State \Return \( \mathcal{D} \circ \AND_2 := \{\lift(\rho_f) : \rho_f \in \mathcal{D}\} \)
\end{algorithmic}
\end{algorithm}

A lifted restriction assigns each variable a value in
\( \{0,1,*,\Delta\} \).
When applying such a restriction to \( f \circ \AND_2 \),
both symbols \( * \) and \( \Delta \) are treated as free variables.
Variables marked by \( \Delta \) are called \emph{masked variables};
although syntactically free, the restricted function does not depend on them.
Tracking masked variables explicitly will be convenient for the subsequent analysis.

\paragraph*{Basic structure of lifted restrictions}
Let \( V \subseteq \{z_1,\dots,z_n\} \) denote the set of core variables of the
semi-adaptive restriction tree \( \mathcal{D} \), and let
\( V_x := \{ x_i : z_i \in V \} \) be the corresponding set of \( x \)-variables,
which we refer to as the \emph{core \(x\)-variables}.
We will use this notation throughout.

By construction, every lifted restriction
\( \rho \in \mathcal{D} \circ \AND_2 \) fixes all \( y \)-variables and fixes all
\( x \)-variables outside \( V_x \) to \(1\).
Moreover, the assignment on \( V_x \) uniquely determines the entire restriction:
for every \( \alpha \in \{\Delta,*\}^{V_x} \), there exists a unique
\( \rho \in \mathcal{D} \circ \AND_2 \) such that
\( \rho(x_i) = \alpha(x_i) \) for all \( x_i \in V_x \).
This follows directly from the defining property of \( \mathcal{D} \), which
guarantees that for every \( \beta \in \{0,*\}^{V} \) there is a unique
\( \rho_f \in \mathcal{D} \) satisfying \( \rho_f(z_i) = \beta(z_i) \) for all
\( z_i \in V \). As a result, the family \( \mathcal{D} \circ \AND_2 \) is  parametrized by
assignments to the core \( x \)-variables.

The construction of the lifted restrictions,
together with the fact that \( \mathcal{D} \) is a max-degree restriction tree,
implies the following structural properties. In particular, these properties show
that \( f \circ \AND_2 \) retains its hardness (in terms of degree) under the lifted restrictions.

\begin{claim}
\label{claim:liftedrho-properties}
Let \( \mathcal{D} \circ \AND_2 \) be the collection of lifted restrictions for
\( f \circ \AND_2 \) obtained from a semi-adaptive max-degree restriction tree
\( \mathcal{D} \) for \( f \) via \cref{algo:fourier-restriction}.
Then the following properties hold:
\begin{enumerate}
    \item For every \( \rho \in \mathcal{D} \circ \AND_2 \), the restricted function
    \( (f \circ \AND_2)|_{\rho} \) does not depend on the masked variables
    \( M(\rho) \).

    \item For every \( \rho \in \mathcal{D} \circ \AND_2 \),
    \( \deg\bigl(\mathbb{E}_{x_{M(\rho)}}[(f \circ \AND_2)|_{\rho}]\bigr) = \deg\bigl((f \circ \AND_2)|_{\rho}\bigr) = |\rho^{-1}(*)|. \)
\end{enumerate}
\end{claim}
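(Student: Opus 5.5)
The plan is to compute the restricted function coordinate by coordinate. Fix $\rho = \lift_{\mathcal{D}}(\rho_f)$ and look at each gadget value $\AND_2(x_i,y_i)$ under $\rho$, using the four cases of \cref{algo:fourier-restriction}.

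\begin{itemize}
\item If $z_i \in V$ and $\rho_f(z_i)=0$, then $y_i$ is fixed to $0$, so $x_i \wedge y_i = 0$ for every value of the masked variable $x_i$.
\item If $z_i\in V$ and $\rho_f(z_i)=*$, then $y_i=1$, so $x_i\wedge y_i = x_i$, a free variable.
\item If $z_i\notin V$, the pair is fixed to $(1,1)$ or $(1,0)$, giving the constant $1=\rho_f(z_i)$ or $0=\rho_f(z_i)$ respectively.
\end{itemize}

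Consequently,
\[
(f\circ\AND_2)|_\rho(x) = f|_{\rho_f}\bigl((x_i)_{z_i\in\rho_f^{-1}(*)}\bigr),
\]
where $z_i$ is renamed to $x_i$ for each free coordinate. The right-hand side involves only the $*$-variables of $\rho$, so no masked variable appears. This proves item 1.

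For item 2, item 1 says the function is constant in $x_{M(\rho)}$. Averaging over uniform assignments to $M(\rho)$ therefore returns the same function, so the first equality holds. Equivalently, in the multilinear representation no monomial involves a masked variable, so the expectation leaves the polynomial unchanged.

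For the second equality, the identity above shows that $(f\circ\AND_2)|_\rho$ is $f|_{\rho_f}$ under the injective renaming $z_i\mapsto x_i$ on the free coordinates. A renaming of variables does not change the multilinear representation's degree, so
\[
\deg\bigl((f\circ\AND_2)|_\rho\bigr)=\deg(f|_{\rho_f}).
\]
Since $\mathcal{D}$ is a max-degree restriction tree, $\deg(f|_{\rho_f}) = |\rho_f^{-1}(*)|$. The $*$-positions of $\rho$ are exactly $\{x_i : \rho_f(z_i)=*\}$, because $*$ arises only from the second case and all $*$'s of $\rho_f$ lie in $V$. Hence $|\rho^{-1}(*)|=|\rho_f^{-1}(*)|$, where $\Delta$-variables are counted separately from $*$ by definition.

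The only point needing care is bookkeeping. Composing with a restriction on the $x,y$ variables commutes with substitution into $f$'s multilinear polynomial, since each gadget output is either a constant or a single variable. Degree is then read off directly, with no multilinearization issues. There is no real obstacle here.
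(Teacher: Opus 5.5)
Your proof is correct and follows essentially the same route as the paper: a coordinatewise check that $\AND_2(x_i,y_i)|_\rho=\rho_f(z_i)$, with masked coordinates forced to $0$ by $y_i=0$. From this you identify $(f\circ\AND_2)|_\rho$ with $f|_{\rho_f}$ up to renaming, then invoke the max-degree property and the fact that averaging over variables the function ignores changes nothing. Your explicit check that $|\rho^{-1}(*)|=|\rho_f^{-1}(*)|$, because all $*$'s of $\rho_f$ lie in $V$, is a small detail the paper leaves implicit.
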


\begin{proof}
For (1), if \(x_i\) is masked in \(\rho\), then by construction \(\rho(y_i)=0\).
Hence \(\AND_2(x_i,y_i)|_{\rho}=0\) regardless of the value of \(x_i\), and therefore
\((f \circ \AND_2)|_{\rho}\) is independent of all masked variables.

For (2), fix \(\rho \in \mathcal{D} \circ \AND_2\), and let \(\rho_f \in \mathcal{D}\)
be the restriction used to generate \(\rho\), i.e., \(\rho=\lift(\rho_f)\).
By construction, for every \(i \in [n]\) we have
\(\AND_2(x_i,y_i)|_{\rho}=\rho_f(z_i)\).
Thus, \((f \circ \AND_2)|_{\rho}\) depends on the variables \(\{x_i,y_i\}\) only
through the tuple \((\rho_f(z_1),\ldots,\rho_f(z_n))\).
After ignoring the masked variables (which \((f \circ \AND_2)|_{\rho}\) does not
depend on by part~(1)), the resulting function coincides with \(f|_{\rho_f}\).
Since \(\rho_f\) is a max-degree restriction of \(f\), we obtain
\(\deg((f \circ \AND_2)|_{\rho})=\deg(f|_{\rho_f})=|\rho_f^{-1}(*)|=|\rho^{-1}(*)|\).

Finally, since \( (f \circ \AND_2)|_{\rho} \) does not depend on the variables in
\( M(\rho) \), the function
\( \mathbb{E}_{x_{M(\rho)}}[(f \circ \AND_2)|_{\rho}] \)
is obtained by simply viewing \( (f \circ \AND_2)|_{\rho} \) as a function on the
remaining free variables. In particular, taking expectation over the masked variables
leaves the function unchanged as a polynomial in the remaining free variables.
Hence,
\(
\deg\bigl(\mathbb{E}_{x_{M(\rho)}}[(f \circ \AND_2)|_{\rho}]\bigr)
=
\deg\bigl((f \circ \AND_2)|_{\rho}\bigr).
\)
\end{proof}

Next, we introduce a probability distribution on the restrictions in
\( \mathcal{D} \circ \AND_2 \) for use in a random restriction argument.
Fix a parameter \( p \in (0,1) \), and recall that every restriction
\( \rho \in \mathcal{D} \circ \AND_2 \) satisfies
\( |\rho^{-1}(*)| + |\rho^{-1}(\Delta)| = d \), where \( d \) is the depth of the
semi-adaptive restriction tree \( \mathcal{D} \).

We consider the uniform distribution over all restrictions in
\( \mathcal{D} \circ \AND_2 \) that leave exactly \( pd \) variables 
marked by \( * \).
Formally, let
\(
U := \{\, \rho \in \mathcal{D} \circ \AND_2 \mid |\rho^{-1}(*)| = pd \,\}.
\) 
By construction, \( |U| = \binom{d}{pd} \).
We sample a restriction uniformly at random from \( U \), i.e., \( \Pr[\rho] = 1/\binom{d}{pd} \) for all \( \rho \in U \). We denote this distribution by \( \mathcal{U}_{p}(\mathcal{D} \circ \AND_2) \).

\medskip

Next, we analyze the effect of sampling a restriction
\( \rho \sim \mathcal{U}_{p}(\mathcal{D} \circ \AND_2) \), applying it to
\( f \circ \AND_2 \), and then taking expectation over the masked variables
\( M(\rho) \).
By \cref{claim:liftedrho-properties}, the resulting function retains degree \(pd\).
On the other hand, we show that applying the same process—sampling
\( \rho \sim \mathcal{U}_{p}(\mathcal{D} \circ \AND_2) \) and taking expectation over
\( M(\rho) \)—to an arbitrary Boolean function \(g\) over the
\(x\)-variables causes its high-degree Fourier mass to decay exponentially
above level \(k\), provided \(k \gtrsim p^2d\).
Optimizing this tradeoff leads to the choice \(p=\Theta(d^{-1/3})\), for which
\(pd=\Theta(d^{2/3})\) and the Fourier tail decays exponentially above level
\(k=\Theta(\sqrt{pd})=\Theta(d^{1/3})\).
Combining these two observations yields our main result: a lower bound on the
approximate \( \gamma_2 \) norm of \( f \circ \AND_2 \) in terms of the sparsity
of \( f \).

\begin{claim}\label{claim:prop2}
Let \( \mathcal{D} \circ \AND_2 \) be the collection of lifted restrictions for
\( f \circ \AND_2 \) obtained from a semi-adaptive max-degree restriction tree
\( \mathcal{D} \) of depth \( d \) via \cref{algo:fourier-restriction}.
Let \( g : \{0,1\}^n \to \{0,1\} \) be an arbitrary Boolean function on the
\( x \)-variables \( \{x_1,\dots,x_n\} \). Then for every integer \( k \ge 0 \),
\[
\mathbb{E}_{\rho \sim \mathcal{U}_{p}(\mathcal{D} \circ \AND_2)}
\Bigl[
\fourierwt{
\mathbb{E}_{x_{M(\rho)}}[g|_{\rho}]
}^{\ge k}
\Bigr]
\le
\sum_{t=k}^{pd}
\left(
p\sqrt{\frac{ed}{t}}
\right)^t.
\]

In particular, if
\(
k \ge 4ep^2d,
\)
then
\[
\mathbb{E}_{\rho \sim \mathcal{U}_{p}(\mathcal{D} \circ \AND_2)}
\Bigl[
\fourierwt{
\mathbb{E}_{x_{M(\rho)}}[g|_{\rho}]
}^{\ge k}
\Bigr]
\le
2^{-k+1}.
\]
\end{claim}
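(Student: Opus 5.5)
The plan is to reduce everything to a single fixed function on the core $x$-variables $V_x$, and then run a standard "random subset of coordinates" tail estimate. The key observation is that every $\rho \in \mathcal{D}\circ\AND_2$ fixes all $x$-variables outside $V_x$ to the \emph{same} value $1$, independent of $\rho$. Also, $g$ depends only on $x$-variables, so the $y$-part of $\rho$ is irrelevant. Define $h : \{0,1\}^{V_x} \to \{0,1\}$ as $g$ with all non-core $x$-variables set to $1$. Then $h$ is Boolean, and by Parseval $\sum_{T \subseteq V_x} \widehat{h}(T)^2 = \mathbb{E}[h^2] \le 1$. Moreover, each $\rho$ is determined by its star set $S = \rho^{-1}(*) \subseteq V_x$, with $M(\rho) = V_x \setminus S$, and $\rho \sim \mathcal{U}_p(\mathcal{D}\circ\AND_2)$ corresponds to $S$ being a uniformly random $pd$-subset of $V_x$.

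Next I will compute the Fourier expansion of the averaged restriction. Write $h = \sum_{T \subseteq V_x} \widehat{h}(T)\chi_T$. Averaging over independent uniform values of the masked variables kills every character $\chi_T$ with $T \cap M(\rho) \neq \emptyset$, and leaves every $\chi_T$ with $T \subseteq S$ unchanged. Hence
\[
\mathbb{E}_{x_{M(\rho)}}[g|_\rho] \;=\; \sum_{T \subseteq S} \widehat{h}(T)\,\chi_T ,
\]
and therefore
\[
\fourierwt{\mathbb{E}_{x_{M(\rho)}}[g|_{\rho}]}^{\ge k} \;=\; \sum_{T\subseteq S,\ |T|\ge k} |\widehat{h}(T)| .
\]
This identity is the step that needs the most care, since it uses that the non-core fixing does not vary with $\rho$. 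Once it is established, the rest is routine. Note in particular that only sets with $|T| \le |S| = pd$ contribute, which explains the upper limit $pd$ in the sum.

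Taking expectation over $S$ and using linearity gives
\[
\sum_{t=k}^{pd} \Pr[T \subseteq S]\sum_{|T|=t}|\widehat{h}(T)| ,
\]
where $\Pr[T\subseteq S]$ depends only on $t = |T|$. For $|T| = t$,
\[
\Pr[T\subseteq S] \;=\; \prod_{i=0}^{t-1}\frac{pd-i}{d-i} \;\le\; p^t .
\]
By Cauchy--Schwarz and Parseval,
\[
\sum_{|T|=t}|\widehat{h}(T)| \;\le\; \sqrt{\tbinom{d}{t}} \;\le\; (ed/t)^{t/2} .
\]
Multiplying the two bounds gives the term $\bigl(p\sqrt{ed/t}\bigr)^t$, which is the first inequality of the claim.

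For the second inequality, suppose $k \ge 4ep^2 d$. Then every $t \ge k$ satisfies $p^2 ed/t \le 1/4$, so each term is at most $2^{-t}$. Summing the geometric series gives $\sum_{t\ge k} 2^{-t} \le 2^{-k+1}$.
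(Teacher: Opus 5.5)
Your proposal is correct and follows the paper's proof essentially step for step. You reduce to the fixed function $h$ obtained by setting the non-core $x$-variables to $1$, note that averaging over $M(\rho)=V_x\setminus\rho^{-1}(*)$ keeps exactly the characters inside the star set, bound $\Pr[T\subseteq S]\le p^{|T|}$, apply Cauchy--Schwarz with Parseval, and sum the geometric tail once $k\ge 4ep^2d$, exactly as the paper does.
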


\begin{proof}
By construction, every restriction \( \rho \in \mathcal{D} \circ \AND_2 \)
fixes all \( x \)-variables outside the set \( V_x \) to \(1\).
Let \( h : \{0,1\}^{|V_x|} \to \{0,1\} \) be the Boolean function obtained from \( g \)
by fixing all variables outside \( V_x \) to \(1\).
Then for every \( \rho \in \mathcal{D} \circ \AND_2 \), we have \( g|_{\rho} = h \). Without loss of generality, assume $V_x = \{x_1,\ldots,x_d\}$.

Taking expectation over any variable kills all Fourier monomials containing it.
Therefore,
\[
\begin{aligned}
\mathbb{E}_{\rho \sim \mathcal{U}_{p}(\mathcal{D} \circ \AND_2)}
\Bigl[
\fourierwt{\,\mathbb{E}_{x_{M(\rho)}}\bigl[g|_{\rho}\,\bigr]}^{\ge k}
\Bigr] 
&=
\mathbb{E}_{\rho \sim \mathcal{U}_{p}(\mathcal{D} \circ \AND_2)}
\Bigl[
\fourierwt{\,\mathbb{E}_{x_{M(\rho)}}\bigl[h\,\bigr]}^{\ge k}
\Bigr] \\
&=
\sum_{\substack{S \subseteq [d]\\ |S|\ge k}}
|\widehat{h}(S)| \cdot
\Pr_{\rho}\bigl[ S \subseteq \rho^{-1}(*) \bigr].
\end{aligned}\]

Under the uniform distribution over restrictions with exactly \( pd \) variables marked by \( * \), the probability
\( \Pr_{\rho}[ S \subseteq \rho^{-1}(*) ] \)
is zero when \( |S| > pd \), and for \( |S| \le pd \) we have
\[
\Pr_{\rho}\bigl[ S \subseteq \rho^{-1}(*) \bigr]
=
\frac{\binom{d-|S|}{pd-|S|}}{\binom{d}{pd}}
\;\le\;
p^{|S|}.
\]
Therefore,
\[
\sum_{\substack{S \subseteq [d]\\ k \le |S| \le pd}}
|\widehat{h}(S)| \cdot
\Pr_{\rho}\bigl[ S \subseteq \rho^{-1}(*) \bigr]
\;\le\;
\sum_{t=k}^{pd}
\sum_{|S|=t}
|\widehat{h}(S)|\, p^t.
\]

Applying the Cauchy--Schwarz inequality, together with
Parseval’s identity (which implies
\( \sum_{S} \widehat{h}(S)^2 \le 1 \) for the Boolean function \( h \)),
we obtain
\(
\sum_{|S|=t} |\widehat{h}(S)| \le \sqrt{\binom{d}{t}}.
\)
Hence,
\[
\sum_{t=k}^{pd}
\sum_{|S|=t}
|\widehat{h}(S)|\, p^t \;\leq\;
\sum_{t=k}^{pd} \sqrt{\binom{d}{t}}\, p^t
\;\le\;
\sum_{t=k}^{pd}
\left(
p\sqrt{\frac{ed}{t}}
\right)^t.
\]

Finally, suppose \(k \ge 4ep^2d\).
Then for every \(t \ge k\),
\[
p\sqrt{\frac{ed}{t}}
\le
p\sqrt{\frac{ed}{k}}
\le
\frac12.
\]

Therefore,
\[
\sum_{t=k}^{pd}
\left(
p\sqrt{\frac{ed}{t}}
\right)^t
\le
\sum_{t=k}^{\infty} 2^{-t}
\le
2^{-k+1}.
\]

This completes the proof.
\end{proof}

\spargammatwo*

\begin{proof}
Let \( f \) have sparsity \( s \).
By \cref{lemma:spars-nonadapt-res-tree}, there exists a semi-adaptive
max-degree restriction tree \( \mathcal{D} \) for \( f \) of depth
\( d = c_1 \log s / \log n \), for a suitable absolute constant \( c_1 > 0 \).

Let \(c_2,c_3>0\) be absolute constants such that \cref{thm:error-reduction}
converts a \(0.44\)-approximator of degree \(r\) into a \(1/3\)-approximator
of degree at most \(c_2r\), and such that every Boolean function \(g\) satisfies
\(
\deg(g) \le c_3\cdot(\adeg(g))^2.
\)
Choose an absolute constant \(c>0\) sufficiently small so that
\(
c^{3/2} \le \frac{1}{4e c_2\sqrt{2c_3}}.
\)
Set
\[
p := c d^{-1/3}
\qquad\text{and}\qquad
k := \frac{c^{1/2}d^{1/3}}{c_2\sqrt{2c_3}}.
\]
Observe that \(pd = cd^{2/3}\), while
\(k=\Theta(d^{1/3})=\Theta(\sqrt{pd})\).

Now suppose, for the sake of contradiction, that there exists a decomposition
of the communication matrix \( M_{f \circ \AND_2} \) of the form
\[
\Pi(x,y) = \sum_{i=1}^m b_i\, g_i(x) h_i(y),
\]
where each \( g_i,h_i \) is Boolean, such that
\( \|\Pi - M_{f \circ \AND_2}\|_\infty \le 1/3 \)
and
\( \sum_{i=1}^m |b_i| \le \tfrac{1}{20}\,2^k \).
We derive a contradiction, thereby proving the theorem.

Let \( \mathcal{D} \circ \AND_2 \) be the collection of lifted restrictions for
\( f \circ \AND_2 \) obtained from \( \mathcal{D} \) via
\cref{algo:fourier-restriction}.
Sample a restriction
\( \rho \sim \mathcal{U}_p(\mathcal{D} \circ \AND_2) \).
We study the effect of applying \( \rho \) and then taking expectation over the
masked variables \( M(\rho) \). 
For notational convenience, define
\[
F_\rho := \mathbb{E}_{x_{M(\rho)}}\bigl[(f \circ \AND_2)|_{\rho}\bigr],
\qquad
G_{i,\rho} := \mathbb{E}_{x_{M(\rho)}}\bigl[g_i|_{\rho}\bigr].
\]
Both are functions of the starred \( x \)-variables under \( \rho \).

\begin{enumerate}
\item \textbf{Hardness of the restricted function.}
By \cref{claim:liftedrho-properties},
\(
\deg(F_\rho) = |\rho^{-1}(*)|.
\)
Since every restriction in the support of
\( \mathcal{U}_p(\mathcal{D} \circ \AND_2) \) satisfies
\( |\rho^{-1}(*)| = pd \),
we have
\(
\deg(F_\rho)=cd^{2/3}.
\)

\item \textbf{Simplification of the approximator \( \Pi \).}
For every restriction \( \rho \in \mathcal{D} \circ \AND_2 \), all \( y \)-variables
are fixed. As a result, for each term in the decomposition
\(
\Pi(x,y) = \sum_{i=1}^m b_i\, g_i(x) h_i(y),
\)
the restricted function \( h_i|_{\rho} \) becomes a constant, which we denote by
\( a_{\rho,i} \in \{0,1\} \). 
Thus,
\(
\Pi|_{\rho} = \sum_{i=1}^m b_i\, a_{\rho,i}\, g_i|_{\rho},
\)
which is a function only of the core \( x \)-variables, since all
non-core \( x \)-variables are fixed to \( 1 \) in every $\rho \in \mathcal{D} \circ \AND_2$.

Fix a restriction \( \rho \). 
The \( \ell_1 \)-mass of the Fourier spectrum of \( \mathbb{E}_{x_{M(\rho)}}\bigl[\Pi|_{\rho}\bigr] \) above level \( k \) can be bounded as follows:
\begin{align*}
\fourierwt{\mathbb{E}_{x_{M(\rho)}}[\Pi|_{\rho}]}^{\ge k}
&=
\sum_{\substack{S \subseteq [n]\\ |S| \ge k}}\Bigl|
\sum_{i=1}^m b_i\, a_{\rho,i}\, \widehat{G_{i,\rho}}(S)
\Bigr| \\
&\leq 
\sum_{\substack{S \subseteq [n]\\ |S| \ge k}} \sum_{i=1}^m |b_i||\widehat{G_{i,\rho}}(S)| \\
&=
\sum_{i=1}^m |b_i|
\sum_{\substack{S \subseteq [n]\\ |S| \ge k}} |\widehat{G_{i,\rho}}(S)| \\
&=
\sum_{i=1}^m |b_i|\, \fourierwt{G_{i,\rho}}^{\ge k}.
\end{align*}

Moreover,
\[
4ep^2d = 4ec^2d^{1/3}
\le
\frac{c^{1/2}d^{1/3}}{c_2\sqrt{2c_3}}
=
k,
\]
where the inequality follows from the choice of \(c\).
Taking expectation over \(\rho\) and applying \cref{claim:prop2}, we obtain

\[
\begin{aligned}
\mathbb{E}_{\rho \sim \mathcal{U}_p(\mathcal{D} \circ \AND_2)}
\Bigl[\fourierwt{\mathbb{E}_{x_{M(\rho)}}\bigl[\Pi|_{\rho}\bigr]}^{\ge k}
\Bigr] &\leq
\sum_{i=1}^m |b_i|\,
\mathbb{E}_{\rho \sim \mathcal{U}_p(\mathcal{D} \circ \AND_2)}
\Bigl[\fourierwt{G_{i,\rho}}^{\ge k}
\Bigr]\\
&\leq \sum_{i=1}^m |b_i| \cdot 2^{-k+1}
\;\leq\;
\frac{1}{10},
\end{aligned}\]

where the final inequality follows from the assumed bound on
\( \sum_i |b_i| \).

Thus, there exists a restriction \( \rho \sim \mathcal{U}_p(\mathcal{D} \circ \AND_2) \)
such that the \( \ell_1 \)-mass of the Fourier spectrum of
\( \Pi|_{\rho} \) above level \( k \), after averaging over the masked variables
\( M(\rho) \), is at most \(0.1\).

\end{enumerate}

Combining the two items, there exists a restriction \( \rho \) such that
\(\deg(F_\rho)=pd=cd^{2/3}\),
while the \( \ell_1 \)-mass of the Fourier spectrum of
\( \mathbb{E}_{x_{M(\rho)}}[\Pi|_{\rho}] \) above level \( k \) is at most \(0.1\).

Let \( \tilde{\Pi} \) be the polynomial obtained from
\( \mathbb{E}_{x_{M(\rho)}}[\Pi|_{\rho}] \) by deleting all Fourier monomials of
degree at least \( k \).
Since the discarded Fourier mass is at most \(0.1\) and
\( \Pi \) is a \(1/3\)-approximator of \( f \circ \AND_2 \),
the polynomial \( \tilde{\Pi} \) has degree \(<k\) and
\(0.44\)-approximates
 \( F_\rho \).
By standard error reduction (\cref{thm:error-reduction}), \( \tilde{\Pi} \) can be converted into a
\(1/3\)-approximator of degree at most \( c_2 k \).
As a result,
\[
\adeg(F_\rho)
<
c_2k
=
\frac{c^{1/2}d^{1/3}}{\sqrt{2c_3}}.
\]
On the other hand, \(
\deg(F_\rho) = cd^{2/3}\)
which contradicts the general inequality
\( \deg(g) \le c_3 \cdot (\adeg(g))^2 \)
for Boolean functions \( g \) (\cref{thm:degree-apxdegree}).
This contradiction completes the proof.
\end{proof}

\subsection{Consequences}
\label{subsec:consequences}

Knop et al.~\cite{knop2021log} showed that \( \log \spar(f) \) characterizes the
deterministic communication complexity of \(\AND\)-functions (\(f \circ \AND_2\)), up to polynomial loss and polylogarithmic factors in \( n \).
In particular, they proved that for every Boolean function \( f \),
\[
\Dcc(f \circ \AND_2)
= O\!\left((\log \spar(f))^5 \cdot \log n\right).
\]

Combining this bound with \cref{thm:spar-gamma_2}, together with the fact
that the logarithm of the approximate \( \gamma_2 \) norm lower bounds bounded-error
quantum communication complexity, we immediately obtain that for every Boolean
function \( f \),
\[
\Dcc(f \circ \AND_2)
=
O\!\left(\Qcc(f \circ \AND_2)^{15} \cdot (\log n)^6\right).
\]

A tighter relationship can be obtained using a more refined structural result of
Knop et al., which relates deterministic \(\AND\)-query complexity to sparsity and
a combinatorial measure known as \emph{monotone block sensitivity}.

\begin{definition}[Monotone Block Sensitivity]
The \emph{monotone block sensitivity} of a Boolean function \( f : \{0,1\}^n \to \{0,1\} \), denoted \( \mbs(f) \), is a variant of block sensitivity that only considers flipping 0’s to 1’s. A subset \( B \subseteq [n] \) is called a \emph{sensitive 0-block of \( f \) at input \( x \)} if \( x_i = 0 \) for all \( i \in B \), and \( f(x) \neq f(x \oplus 1_B) \), where \( x \oplus 1_B \) denotes the input obtained by flipping all bits in \( B \) from 0 to 1. For an input \( x \in \{0,1\}^n \), let \( \mbs(f,x) \) denote the maximum number of pairwise disjoint sensitive 0-blocks of \( f \) at \( x \). Then, \(\mbs(f) = \max_{x \in \{0,1\}^n} \mbs(f,x).
\)
\end{definition}

Knop et al.~\cite{knop2021log} showed that deterministic communication complexity
can be bounded in terms of both sparsity and monotone block sensitivity.

\begin{claim}[{\cite[Lemma~3.2, Claim~4.4, Lemma~4.6, Theorem~1.2]{knop2021log}}]
\label{claim:dadt-mbs-spar}
For every Boolean function \( f : \{0,1\}^n \to \{0,1\} \),
\[
\Dcc(f\circ \AND_2)
= O\!\left(\mbs(f)^2 \cdot \log \spar(f) \cdot \log n\right).
\]
\end{claim}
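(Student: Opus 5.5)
The plan is to go through AND-decision trees, in which each query returns $\bigwedge_{i\in S} z_i$ for an arbitrary set $S\subseteq[n]$. The first step is a simulation. If $f$ has an AND-decision tree of depth $q$, then $\Dcc(f\circ\AND_2)\le 2q$. To see this, note that $\bigwedge_{i\in S}(x_i\wedge y_i)=\bigl(\bigwedge_{i\in S}x_i\bigr)\wedge\bigl(\bigwedge_{i\in S}y_i\bigr)$, so Alice and Bob can answer each query by exchanging one bit each. It therefore suffices to prove that the AND-decision-tree depth of $f$ is $O(\mbs(f)^2\cdot\log\spar(f)\cdot\log n)$.

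The core of the argument is a sparsity-halving step. I would show that for every nonconstant $f$ there is an adaptive sequence of $O(\mbs(f)^2\log n)$ AND-queries after which the restricted function has sparsity at most $\spar(f)/2$ (or some fixed constant fraction of it). Iterating this step $O(\log\spar(f))$ times brings the sparsity down to $1$, at which point $f$ is constant, and this gives the bound.

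For the halving step I would work at the all-zeros point, as follows.
\begin{enumerate}
\item Each minimal monomial $T$ of $\mathcal{P}(f)$ is a sensitive $0$-block at $0^n$ (after first restricting to a subcube where this holds). By the definition of $\mbs$, any family of pairwise disjoint minimal monomials therefore has size at most $\mbs(f)$.
\item Take a maximal family $T_1,\dots,T_b$ of pairwise disjoint minimal monomials, so $b\le\mbs(f)$. Query $\AND(z_{T_j})$ for each $j$. If some answer is $1$, fix all of $T_j$ to $1$, which collapses many monomials together.
\item If every answer is $0$, then each $T_j$ contains a zero, and I would locate that zero by binary search with AND-queries, at a cost of $O(\log n)$ queries per block.
\item Fixing these located zeros kills every monomial that meets the union $\bigcup_j T_j$. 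By maximality of the family, every monomial meets this union.
\end{enumerate}

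The main obstacle is turning step 4 into an actual constant-factor drop in sparsity. Setting one coordinate to $0$ kills only the monomials that contain that specific coordinate, not every monomial meeting the union. So I would use a potential argument: set a coordinate to $0$ only when it lies in many monomials, and set a whole block to $1$ otherwise. Averaging over the $\le\mbs(f)$ blocks, each of support at most $\mbs(f)$ after restricting to minimal certificates, should give a coordinate lying in at least a $1/\mbs(f)^2$-ish fraction of the monomials. This would cost $O(\mbs(f)^2)$ rounds of $O(\log n)$ queries to achieve halving. This is where the $\mbs(f)^2$ factor should arise, and I expect it to be the delicate counting step, which I would take from \cite[Lemma~3.2, Claim~4.4, Lemma~4.6]{knop2021log}.
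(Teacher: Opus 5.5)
Your reduction to AND-decision trees is fine, as are the plan of $O(\log\spar(f))$ sparsity-halving rounds and the observation that minimal monomials are sensitive $0$-blocks at $0^n$ (no preliminary restriction is needed for that). The gap is the halving step itself. You defer it to the cited work, and the sketch you give for it does not go through.

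\textbf{The averaging has no small hitting set to average over.} Your averaging needs a \emph{small} set of variables meeting every non-constant monomial of the current function. The union of a maximal disjoint family of minimal monomials does meet every monomial, but it need not be small. Minimal monomials are not bounded in size by $\mbs(f)$: for $\AND_n$ one has $\mbs=1$, while its only monomial has size $n$. The usual argument that bounds a minimal sensitive block by the sensitivity at the flipped input uses $1\to 0$ flips, which $\mbs$ does not count. So nothing in your sketch produces a variable lying in a $1/\mbs(f)^2$ fraction of the monomials. What is missing is a genuine structural lemma: every restriction of $f$ has a hitting set for its non-constant monomials of size $O(\mbs(f)^2)$. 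This also uses the fact that $\mbs$ cannot increase under restrictions. This hitting-set bound is the real content of the cited results of Knop et al., and your proposal leaves exactly it unproved.

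\textbf{A query answering $1$ makes no guaranteed progress.} Fixing a queried minimal monomial to $1$ can leave the sparsity unchanged. For $f=u_1\prod_{i\le m}(1-w_i)$, the unique minimal monomial is $\{u_1\}$, yet $f|_{u_1=1}$ still has sparsity $2^m$.

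\textbf{How to close both gaps.} Suppose you have a hitting-set bound $h$ valid for all restrictions. The way to make both branches progress is to query all heavy variables at once. With $r=\spar(f)\ge 2$, let $U$ be the set of variables lying in at least $r/(2h)$ monomials; it is nonempty because a hitting set of size $h$ exists. Query $\AND(z_U)$.
\begin{itemize}
\item On answer $0$, binary search with $O(\log n)$ queries finds $i\in U$ with $z_i=0$, which kills at least $r/(2h)$ monomials.
\item On answer $1$, let $r'$ be the sparsity of the restricted function. It again has a hitting set of size at most $h$, so some variable lies in at least $(r'-1)/h$ of its monomials. That variable lies outside $U$, and fixing variables to $1$ cannot increase the number of monomials containing it. Hence $(r'-1)/h<r/(2h)$, i.e.\ $r'<r/2+1$.
\end{itemize}
This gives $O(h\log\spar(f))$ rounds of $O(\log n)$ queries each. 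That is the claimed bound once $h=O(\mbs(f)^2)$.
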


Intuitively, a large value of \( \mbs(f) \) indicates that a large-arity
\( \pOR \) function can be embedded into \( f \) via suitable restrictions and
identifications of variables.
When such a function \( f \) is lifted via composition with \( \AND_2 \), this
structure gives rise to an embedded instance of the \emph{unique set disjointness}
problem.

The \emph{unique set disjointness} function \( \UDISJ_k \) is a partial Boolean
function on inputs \( x,y \in \{0,1\}^k \), defined as
\[
\UDISJ_k(x,y) =
\begin{cases}
0, & \text{if } |x \land y| = 0, \\
1, & \text{if } |x \land y| = 1, \\
\text{undefined}, & \text{otherwise},
\end{cases}
\]
where \( x \land y \) denotes the bitwise AND and \( |\cdot| \) the Hamming weight. That is, under the promise that the inputs
are either bitwise disjoint or intersect in exactly one coordinate, the function distinguishes between these two cases.

The following result of Knop et al.~shows that large monotone block sensitivity
forces large embedded instances of unique set disjointness.

\begin{claim}[{\cite[Claim~4.7]{knop2021log}}]
\label{claim:mbs-udisj}
Let \( f : \{0,1\}^n \to \{0,1\} \) be a Boolean function with
\( \mbs(f) = k \).
Then the communication matrix of \( f \circ \AND_2 \) contains, as a submatrix
(up to flipping output bits), the communication matrix of
\( \UDISJ_k \).
\end{claim}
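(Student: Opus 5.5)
The plan is to build the embedding from the witness of monotone block sensitivity. Fix an input \(x \in \{0,1\}^n\) and pairwise disjoint blocks \(B_1,\dots,B_k\) such that \(x_i = 0\) on every \(B_j\) and \(f(x \oplus 1_{B_j}) \neq f(x)\) for each \(j\).

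First, I would reduce to a minimal, well-behaved configuration, following the standard trick used for block sensitivity. Shrink each block to a minimal sensitive 0-block. Among all inputs and choices of \(k\) disjoint sensitive 0-blocks, take one in which \(x\) is maximal, that is, no further \(0\to 1\) flips outside the blocks are useful. I would then argue the key property: flipping two or more blocks simultaneously gives the same value as \(x\). Concretely, for \(j \neq j'\), the input \(x' = x \oplus 1_{B_j}\) is itself a valid base point, because \(x'\) is still \(0\) on \(B_{j'}\). If \(f(x' \oplus 1_{B_{j'}}) \neq f(x')\), then \(x'\) has the \(k-1\) other blocks as candidates. The intended argument is to pick \(x\) maximal in the partial order among inputs achieving \(\mbs(f)=k\), or else simply to take the blocks minimal and accept that the promise only concerns zero blocks or exactly one block. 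That is the point: \(\UDISJ_k\) only needs inputs where at most one block is activated. So no multi-block analysis is needed at all, and the only cases are \(|x\wedge y|\in\{0,1\}\).

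Second, define the embedding. Alice's \(k\)-bit input \(a\) maps to \(X(a)\) and Bob's input \(b\) maps to \(Y(b)\), so that the coordinatewise AND of \(X(a)\) and \(Y(b)\) equals \(x \oplus \bigcup_{j: a_j=b_j=1} 1_{B_j}\). For \(i\) outside every block, set \(X_i = Y_i = x_i\). For \(i \in B_j\), set \(X_i = a_j\) and \(Y_i = b_j\). Then \(X_i \wedge Y_i = a_j \wedge b_j\) on \(B_j\), which is \(1\) exactly when block \(j\) is flipped. Hence \((f\circ\AND_2)(X(a),Y(b)) = f(x)\) when \(a\wedge b = 0\), and equals \(f(x\oplus 1_{B_j}) = 1-f(x)\) when \(a \wedge b = e_j\).

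Third, conclude. The submatrix with rows indexed by \(X(a)\) and columns by \(Y(b)\) agrees on promise inputs with \(\UDISJ_k\) if \(f(x)=0\), and with its negation if \(f(x)=1\). This matches the phrase ``up to flipping output bits.'' The maps \(a\mapsto X(a)\) and \(b\mapsto Y(b)\) are injective since blocks are nonempty, so this is a genuine submatrix.

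The only potential obstacle is the worry about inputs with several activated blocks. These are exactly the cases outside the promise of \(\UDISJ_k\), so they impose no constraint and the construction goes through directly.
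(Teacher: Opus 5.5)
Your construction is correct. It is essentially the argument behind the cited Claim~4.7 of Knop et al., which the paper sketches just before the statement: fixing the coordinates outside the blocks to $x$ and identifying all of $B_j$ with the single bit $a_j\wedge b_j$ turns $f$ into a \pOR\ on $k$ bits (up to negation). Hence $f\circ\AND_2$ restricts to $\UDISJ_k$ (up to negation) on the promise inputs, and your injectivity check makes this a genuine submatrix.

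Delete the detour in your first paragraph (maximal $x$, and the asserted property that flipping several blocks returns the value $f(x)$, which fails e.g.\ for $\OR_n$ at $x=0$). As you yourself conclude, it is unnecessary.
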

Using known lower bounds for unique set disjointness, we obtain the following.

\begin{theorem}[\cite{kalyanasundaram1992probabilistic,razborov1990distributional,razborov2003quantum,sherstov2009pattern}]
\label{thm:udisj}
\(
\Rcc(\UDISJ_k) = \Omega(k)
\text{ and }
\Qcc(\UDISJ_k) = \Omega(\sqrt{k}).
\)
\end{theorem}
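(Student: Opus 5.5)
This result is classical, and I would prove it by assembling the two standard lower-bound techniques rather than by anything specific to this paper. The randomized bound and the quantum bound need different tools, so I treat them separately. Throughout, I use that $\UDISJ_k$ is the composition of the promise-$\OR_k$ (inputs of Hamming weight $0$ or $1$) with the $\AND_2$ gadget. Here the promise on the $z$-variables is exactly the promise on $|x\land y|$.

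\textbf{Randomized bound.} I would follow Razborov's distributional argument and argue via the corruption (rectangle) bound. First, define a hard distribution $\mu$ on promise inputs. Partition $[k]$ at random into a block of size about $k/2$ for Alice, one of size about $k/2$ for Bob, and a special coordinate. Sample $x$ and $y$ as random subsets of their respective sides. With probability $3/4$ they are disjoint, and with probability $1/4$ they share exactly the special coordinate. The key lemma is the corruption statement: every rectangle $R$ with $\mu_0(R)\ge 2^{-\delta k}$ satisfies $\mu_1(R)\ge \alpha\,\mu_0(R)$ for absolute constants $\alpha,\delta>0$. Given this, a protocol with $o(k)$ bits and small error under $\mu$ would partition the matrix into $2^{o(k)}$ rectangles. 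The $1$-rectangles would then have to absorb most of $\mu_1$ while being nearly free of $\mu_0$, which contradicts corruption. Yao's minimax principle then gives $\Rcc(\UDISJ_k)=\Omega(k)$. The corruption lemma is proved by an entropy/conditioning argument: condition on the partition, then show that a large rectangle is roughly product-like on most coordinates. This is the technical heart of the randomized case, and I would take it essentially as in Razborov's paper, or use the information-complexity proof of Bar-Yossef et al. as an alternative.

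\textbf{Quantum bound.} I would use Sherstov's pattern matrix method, or equivalently Razborov's original approach, together with the fact that $\log\agamma$ (more precisely, generalized discrepancy) lower bounds $\Qcc$, as in \cref{thm:agamma-qcc}. The first step is the polynomial lower bound $\adeg(\pOR_m)=\Omega(\sqrt m)$. This follows by symmetrization (Nisan--Szegedy/Paturi): symmetrize an approximating polynomial into a univariate polynomial $q$ with $q(0)\approx 0$, $q(1)\approx 1$, and $q$ bounded on $[0,m]$, and then apply Markov's inequality. By LP duality this yields a dual witness $\psi$ on inputs of weight $\le 1$ with high pure degree and large correlation with $\pOR$.

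The second step is to lift $\psi$. Embed $\pOR_{m}$ composed with a small $\AND/\OR$ pattern gadget into $\UDISJ_k$ with $m=\Theta(k)$; each block of $O(1)$ coordinates realizes a pattern-matrix gadget while preserving the weight-$\le 1$ promise. Then apply Sherstov's spectral-norm bound for pattern matrices, which gives generalized discrepancy $2^{-\Omega(\sqrt m)}$. The resulting bound is $\Qcc(\UDISJ_k)=\Omega(\sqrt k)$.

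\textbf{Main obstacle.} In the quantum part, the main obstacle is the lifting step. The pattern-matrix embedding must respect the unique-intersection promise, so that the dual witness is supported only on legal inputs. I would handle this first by checking that the support of $\psi$ lies on weight $0$/$1$ strings and that the gadget maps such strings to pairs with $|x\land y|\le 1$. In the randomized part, the corresponding difficulty is the corruption lemma.
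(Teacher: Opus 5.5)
The paper gives no proof of this statement; it cites Kalyanasundaram--Schnitger and Razborov for the randomized bound and Razborov and Sherstov for the quantum bound. Reconstructing those arguments is the right plan, but two steps fail as written.

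\textbf{Quantum part.} Your fix for the ``main obstacle'' cannot work. A function $\psi$ supported on $\{z : |z|\le 1\}$ that is orthogonal to all polynomials of degree at most $1$ is identically zero, because $1,z_1,\dots,z_m$ restricted to these $m+1$ points span all functions on them. So no dual witness of pure high degree $\ge 2$ lives on the promise. This is not an artifact of duality. On the promise, $\pOR_m$ equals the degree-$1$ polynomial $z_1+\dots+z_m$, and $\UDISJ_k$ agrees with the matrix $\sum_i x_iy_i$, whose $\gamma_2$ norm is at most $k$. Hence any bound obtained from an approximant constrained only on promise entries is $O(\log k)$.

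The $\Omega(\sqrt m)$ in your symmetrization step comes entirely from requiring the approximant to be bounded on inputs of weight $\ge 2$; that is what makes $q$ bounded on $[0,m]$. This requirement is justified only because a protocol's acceptance probability lies in $[0,1]$ on every input, including those violating the promise. The repair has three parts. First, dualize this bounded notion: take $\psi$ with $\|\psi\|_1=1$ and pure high degree $\Omega(\sqrt m)$ such that its correlation with $\pOR_m$ on weight-$\le 1$ inputs, minus its $\ell_1$-mass on inputs of weight $\ge 2$, is at least a constant. Second, lift $\psi$ with your gadget, which correctly sends weight-$\le 1$ strings to pairs with $|x\wedge y|\le 1$ and heavier strings off the promise. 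Third, apply the partial-function form of generalized discrepancy, in which the off-promise mass of the lifted matrix is charged against the bound rather than required to vanish. For the same reason, \cref{thm:agamma-qcc}, which is stated for total functions, cannot be invoked verbatim; the approximating matrix must also be constrained off the promise.

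\textbf{Randomized part.} The corruption lemma is applied to the wrong side. Take $\mu_0$ to be the disjoint part and $\mu_1$ the intersecting part of $\mu$; this is the labeling under which your lemma is Razborov's. The lemma then says nothing about rectangles of small $\mu_0$-mass, so $1$-rectangles nearly free of $\mu_0$ contradict nothing. Indeed, the $k$ rectangles $\{x : j\in x\}\times\{y : j\in y\}$ contain no disjoint pair and together cover every intersecting input.

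The contradiction must come from the rectangles on which the protocol outputs $0$. These carry $\mu_0$-mass $1-O(\epsilon)$ but only $O(\epsilon)$ of $\mu_1$-mass. Summing the corruption inequality over these at most $2^c$ rectangles gives them $\mu_1$-mass at least $\alpha\bigl(1-O(\epsilon)-2^{c-\delta k}\bigr)$, which forces $c=\Omega(k)$. Unlike the quantum issue, this is an easy fix once the sides are stated correctly.
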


\begin{claim}
\label{claim:mbs-qcc-rcc}
If \( f : \{0,1\}^n \to \{0,1\} \) satisfies \( \mbs(f) = k \), then
\[
\Rcc(f \circ \AND_2) = \Omega(k)
\qquad\text{and}\qquad
\Qcc(f \circ \AND_2) = \Omega(\sqrt{k}).
\]
\end{claim}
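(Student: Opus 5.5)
This follows by reduction from the embedding in \cref{claim:mbs-udisj} together with the lower bounds of \cref{thm:udisj}. Let $\mbs(f)=k$. By \cref{claim:mbs-udisj}, the communication matrix $M_{f\circ\AND_2}$ contains, as a submatrix, the matrix of $\UDISJ_k$, possibly with all output bits flipped. Restricting to a submatrix means there are maps $x'\mapsto x(x')$ and $y'\mapsto y(y')$ from $\UDISJ_k$ inputs to $f\circ\AND_2$ inputs such that, for every $(x',y')$ in the promise, $F(x(x'),y(y'))$ equals $\UDISJ_k(x',y')$ or its negation, where $F:=f\circ\AND_2$. The flip is uniform over the whole submatrix.

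Now suppose $\Pi$ is any randomized (respectively quantum) protocol for $F$ with error at most $1/3$ and cost $c$. I build a protocol for $\UDISJ_k$ as follows. On inputs $(x',y')$ satisfying the promise, Alice computes $x(x')$ and Bob computes $y(y')$ locally, with no communication. They then run $\Pi$ on $(x(x'),y(y'))$ and output the result, negated if the embedding involved a flip. Each party's map depends only on that party's own input, so the simulation is legal in both the public-coin randomized model and the entanglement-assisted quantum model. The cost stays $c$ and the error stays at most $1/3$ on every promise input.

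This gives $\Rcc(\UDISJ_k)\le\Rcc(F)$ and $\Qcc(\UDISJ_k)\le\Qcc(F)$. Applying \cref{thm:udisj} then yields $\Rcc(F)=\Omega(k)$ and $\Qcc(F)=\Omega(\sqrt{k})$.

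The only step needing care is that the lower bounds in \cref{thm:udisj} are for the partial function, so the protocol only has to be correct on promise inputs; that is exactly what the embedding supplies. The whole argument is routine and I expect no real obstacle here.
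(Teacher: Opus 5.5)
Your proposal is correct and is essentially the paper's argument: the paper proves this claim in one line, saying it "follows immediately" from \cref{claim:mbs-udisj} and \cref{thm:udisj}. You spell out the standard rectangular reduction behind that line, with local input maps and an optional global output negation, and your remark that the flip is uniform is right, since $f$ takes the value $f(x)$ with no block flipped and $1-f(x)$ with exactly one sensitive block flipped.
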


\begin{proof}
This follows immediately from
\cref{claim:mbs-udisj} and \cref{thm:udisj}.
\end{proof}

Combining \cref{claim:mbs-qcc-rcc} with
\cref{claim:dadt-mbs-spar}, we obtain the following relationships between
deterministic, randomized, and quantum communication complexity for
\(\AND\)-functions.

\lecandfns*
\begin{proof}
Combine \cref{thm:spar-gamma_2}, \cref{thm:agamma-qcc}, \cref{claim:dadt-mbs-spar} and \cref{claim:mbs-qcc-rcc}.
\end{proof}

\bibliographystyle{alpha}
\bibliography{BoolFnRefs}
\end{document}